%% file: soda.tex
\documentclass[11pt,letterpaper]{article}
\usepackage[letterpaper,margin=1in]{geometry}

\usepackage{cite}
\usepackage{amsmath}
\usepackage{amssymb}
\usepackage{amsfonts}
\usepackage{tabularx}
\usepackage{booktabs}
\usepackage{graphicx}
\usepackage{textcomp}
\usepackage{xcolor}
\usepackage{bm}

\usepackage{hyperref}
\hypersetup{
  colorlinks,
  linkcolor={green!60!black},
  citecolor={red!70!black},
  urlcolor={blue!70!black}
}

\usepackage{cleveref}
\usepackage{caption}
\usepackage{subcaption}

\usepackage{xurl}
\usepackage{makecell}
\input{def}

\newcommand{\supp}[1]{\textnormal{supp}\left(#1\right)}

\begin{document}

 \title{From Linear Exact Repair to Mixed-Alphabet Codes: The Bandwidth-Subpacketization Tradeoff of MDS Array Codes}
\author{Zihao Zhang \and Guodong Li \and Sihuang Hu}
\date{}
\maketitle

\begin{abstract}
MDS array codes achieve optimal storage efficiency in distributed storage systems, but their repair bandwidth depends critically on the sub-packetization level. While the theoretically optimal repair bandwidth (i.e., cut-set bound) of a high-rate exact-repair MDS array code is attainable with sufficiently large sub-packetization, the fundamental limits for fixed sub-packetization remain poorly understood. We study this problem for centralized linear exact repair of $h$ failures from $d$ helper nodes over a fixed finite field.

We show that the rank profile of any linear repair scheme is constrained by two auxiliary mixed-alphabet codes. The first code is formed from the kernels of the helper projections and exists for all repair degrees; the second is formed from their row spaces and applies in the full-helper regime when $d\ge r$. These codes have complementary alphabet profiles, and their size and minimum-distance requirements are forced solely by the MDS property and repair correctness. This gives a unified route from classical coding-theoretic upper bounds to repair-bandwidth lower bounds.
As a consequence, we obtain several explicit lower bounds via Singleton, Hamming, Plotkin, and Griesmer arguments, including known bounds as corollaries. We further formulate a Delsarte-type support-enumerator LP for mixed-alphabet codes and give an exact type-compressed implementation. For fixed sub-packetization \(\ell\), this yields a polynomial-time
procedure for evaluating the resulting finite-parameter lower bound.
\end{abstract}

\section{Introduction}
\label{sec:intro}

Maximum distance separable (MDS) array codes are a standard model for erasure-coded distributed storage. They achieve the optimal storage-reliability tradeoff: in an $(n,k,\ell)$ MDS array code over $\Fq$, each node stores a vector of length $\ell$, and the contents of any $k$ nodes determine the entire data object. The parameter $\ell$, called the sub-packetization, provides additional flexibility for repair. By splitting each node into $\ell$ symbols, a repair scheme can ask helper nodes to transmit carefully chosen linear combinations of their stored symbols, rather than downloading entire nodes. This flexibility can reduce repair bandwidth while preserving the MDS storage-reliability guarantee.

A central question is to understand the tradeoff between repair bandwidth and sub-packetization. The authors in~\cite{dimakis10} characterize the optimal bandwidth (i.e., cut-set bound) of MDS array codes when both sub-packetization and field size are sufficiently large. Since then, many explicit MSR and near-MSR constructions have been proposed, including product-matrix (PM) codes~\cite{rashmi11}, PM-RBT codes~\cite{rashmi15}, Butterfly codes~\cite{pamiesJuarez16}, Ye-Barg codes~\cite{ye17,ye17a}, generic transformations~\cite{li18}, Clay codes~\cite{vajha18}, and more recent universal MSR constructions~\cite{li24}; see~\cite{ramkumar22} for a survey. However, known lower bounds~\cite{alrabiah19,balaji18,balaji22} and high-rate constructions show that achieving the cut-set bound for high-rate MDS array codes under exact repair typically requires large, often exponential, sub-packetization.

In practice, a large $\ell$ increases the number of fragments that must be indexed, accessed, and combined during repair, leading to additional I/O and implementation overhead. This motivates the study of lower bounds for repair in the fixed sub-packetization regime.

Several constructions operate with small or constrained sub-packetization levels, including the piggybacking framework~\cite{rashmi17}, Hitchhiker codes~\cite{rashmi14}, HashTag codes~\cite{kralevska18-TBG, kralevska18-DASC}, elastic transformations~\cite{tang23}, $\varepsilon$-MSR codes~\cite{rawat17,ramkumar25}, and other related constructions~\cite{li21,li23MDS,cheng26,kadekodi23}. These works give upper bounds on the achievable repair bandwidth. In this paper, we focus on lower bounds for fixed $\ell$ and fixed field $\Fq$.

The authors of the survey~\cite{ramkumar22} formalized the general tradeoff question as follows.

\begin{openproblem}\cite[Open Problem 9]{ramkumar22}
\label{op:ramkumar}
Characterize the tradeoff between repair bandwidth, sub-packetization level, and field size for the general class of vector MDS codes.
\end{openproblem}

Lower bounds for fixed sub-packetization have been studied in several settings. For Reed--Solomon codes viewed over extension fields, Guruswami and Wootters~\cite{guruswami16} developed linear repair schemes and lower bounds using the subfield structure, and Dau and Milenkovic~\cite{dau17} refined these results for several families of full-length Reed--Solomon codes. For MDS array codes, Zhang, Li, and Hu~\cite{zhang2026} determined the average-min and max-min repair bandwidth for the special case $(k+2,k,\ell=2)$. More recently, Liu and Wu~\cite{Wu26-1} obtained a general projective-counting lower bound for linear exact repair in MDS array codes, and Wu~\cite{Wu26-2} refined this approach via an incidence-multiplicity argument. These works provide finite-parameter lower bounds from geometric or counting arguments.

In this paper, we study a local version of the fixed sub-packetization lower-bound problem. A local repair instance consists of a failure set $\mathcal H$ of size $h$ and a helper set $\mathcal D$ of size $d$. We ask for the minimum bandwidth of a centralized linear repair scheme for such an instance, minimized over all MDS array codes and over all choices of $(\mathcal H,\mathcal D)$ with these prescribed sizes. This local quantity is weaker than a global repair objective, which requires a single code to support low-bandwidth repair simultaneously over a prescribed family of repair instances. Nevertheless, it is a basic building block for understanding the full tradeoff.

Our approach is to recast each local repair scheme as a feasibility problem for auxiliary mixed-alphabet codes. This viewpoint turns repair-bandwidth lower bounds into code-size feasibility constraints and makes Delsarte-type LP methods available.

\subsection{Our contributions}

We study the local minimum bandwidth for the centralized linear repair of $h$ failed nodes from exactly $d$ helper nodes in an $(n,k,\ell)$ MDS array code over $\Fq$. A projection argument shows that it suffices to prove lower bounds for the reduced full-helper instance with $n'=h+d$ and redundancy $r'=h+d-k$. In the statements below, we use $r=n-k$ to denote this reduced redundancy.

\paragraph{Result 1: auxiliary code reduction.}
Every centralized linear repair scheme with bandwidth $\beta$ induces two auxiliary mixed-alphabet codes. Let
$
    \alpha=d\ell-\beta.
$
The null-space code $\cA$ exists for all repair degrees and satisfies
\[
    |\cA|\ge q^{\alpha-(r-h)\ell},
    \quad
    d_{\min}(\cA)\ge r+1.
\]
When $d\ge r$, the row-space code $\cB$ satisfies
\[
    |\cB|=q^{h\ell},
    \quad
    d_{\min}(\cB)\ge d-r+1.
\]
Hence, a feasible repair profile must pass mixed-alphabet code feasibility tests with prescribed coordinate dimensions, sizes, and distances.

\paragraph{Result 2: combinatorial profile constraints.} Applying classical coding-theoretic bounds to $\cA$ and $\cB$ translates code-size upper bounds into repair-bandwidth lower bounds. Singleton-type constraints recover the centralized cut-set bound. Hamming- and Plotkin-type constraints give additional finite-parameter profile inequalities. For example, Griesmer's bound applied to the flattened row-space code gives an explicit lower bound
\[
    \beta
    \ge
    \sum_{i=0}^{h\ell-1}
    \left\lceil \frac{d-r+1}{q^i}\right\rceil
\]
whenever $d\ge r$. Several known lower bounds on repair arise as special cases or relaxations of these profile constraints.

\paragraph{Result 3: exact type-compressed LP algorithm.} We formulate a Delsarte-type LP for mixed-alphabet codes with prescribed coordinate dimensions. The ordered support-enumerator LP has one variable and one dual constraint for each subset of the $d$ helper coordinates, and hence has size $2^d$. For fixed sub-packetization $\ell$, coordinates with the same local dimension are symmetric. Grouping supports by their type gives an exact type-compressed LP with at most $(d+1)^\ell$ support-type variables and constraints.

The resulting LP lower bound can be evaluated in time \(O_\ell\!\left(d^{\ell-1} T_{\mathrm{LP}}\bigl((d+1)^\ell\bigr)\right)\),
where $T_{\mathrm{LP}}(N)$ denotes the time to solve an LP with $O(N)$ variables and $O(N)$ constraints. Thus, for fixed \(\ell\), the mixed-alphabet feasibility test induced by linear repair can be evaluated in polynomial time in the repair degree.

\subsection{Technical overview}
The reduction starts from the matrices $MH_j$ induced by the repair matrix $M$. For each helper $j$, the kernel $ \bA_j=\ker(MH_j)$ records the local directions that are invisible to the repair matrix, while the row space $ \bB_j=\operatorname{row}(MH_j) $ records the downloaded linear combinations.

The null-space auxiliary code consists of all tuples
\[
    (u_j:j\in\mathcal D),\quad u_j\in \bA_j,
\]
whose contributions cancel in the parity-check equations:
\[
    \sum_{j\in\mathcal D} H_j u_j=0.
\]
A nonzero word supported on at most $r$ helpers would give a linear dependence among at most $r$ parity-check blocks, contradicting the MDS property. This gives a lower bound on the distance for $\cA$.

The row-space auxiliary code is the image
\[
    \cB=\{(xMH_j:j\in\mathcal D):x\in\Fq^{1\times h\ell}\}.
\]
When $d\ge r$, the MDS property implies that this map is injective and that every nonzero image has many nonzero block coordinates. Thus $\cB$ also has a prescribed size and minimum block distance.

We then apply classical coding bounds and, finally, a Delsarte LP to the resulting mixed-alphabet feasibility problem.

The algorithmic part comes from symmetry. The support-enumerator LP has one variable per support subset, but for fixed $\ell$, many coordinates have the same local dimension. Averaging over permutations of equal-type coordinates gives an equivalent LP whose variables are indexed only by support types.

\subsection{Organization}

Section~\ref{sec:pre} defines the centralized repair model and proves the full-helper reduction. Section~\ref{sec:reduc} gives the auxiliary-code reduction. Section~\ref{sec:bound} derives combinatorial profile bounds. Section~\ref{sec:LP} gives the type-compressed LP algorithm. We conclude in Section~\ref{sec:conclusion}.

\section{Preliminaries}\label{sec:pre}

Denote $[n]:=\{1,\dots,n\}$. We consider an $(n, k, \ell)$ linear MDS array code $\cC$ over the finite field $\Fq$, where the redundancy $r = n - k$. Each node $C_i$ stores a data vector of length $\ell$ over $\Fq$. %

The MDS property ensures that the contents of any set of at most $r=n-k$ erased nodes can be reconstructed from the remaining nodes; equivalently, any $k$ nodes determine the entire codeword. The structural relationship among the $n$ nodes is governed by the parity-check equations:
\begin{equation}\label{eq:pc}
    H_1C_1 + H_2C_2 + \cdots + H_nC_n = \bm{0},
\end{equation}
where each $H_i$ ($i\in[n]$) is an $r\ell \times \ell$ parity-check submatrix over $\Fq$. The MDS property holds if and only if any $r$ matrices chosen from $\{H_i: i\in[n]\}$ form an invertible $r\ell \times r\ell$ matrix when concatenated horizontally.

Without loss of generality, a linear repair scheme of a failure set $\cH \subseteq [n]$ consisting of $h$ ($1\le h\le r$) failed nodes can be described by a full-rank $h\ell \times r\ell$ \emph{repair matrix} $M$ over $\Fq$. Applying $M$ to the parity-check equations~\eqref{eq:pc} yields:
\begin{equation}
    \label{eq:repair}
    M H_1 C_1 + M H_2 C_2 + \cdots + M H_n C_n = \bm{0}.
\end{equation}

Assume $\cH=\{i_1,\dots,i_h\}$. Let $H_{\cH} = [H_{i_1} \dots H_{i_h}]$ be the $r\ell \times h\ell$ block matrix corresponding to the failed nodes. If the $h\ell \times h\ell$ square matrix $M H_{\cH}$ is invertible (i.e., has full rank $h\ell$), the failed nodes can be successfully recovered from the surviving nodes:
\[
    \begin{bmatrix} C_{i_1} \\ \vdots \\ C_{i_h} \end{bmatrix}= - \left(M H_{\cH} \right)^{-1} \sum_{j\in[n]\setminus \cH} M H_j C_j.
\]

During the repair process driven by $M$, the repair center needs the quantity $MH_jC_j$ from each surviving node $j$. Since the row space of $MH_j$ has dimension $\rk{MH_j}$, node $j$ only needs to transmit $\rk{MH_j}$ independent linear combinations of its $\ell$ stored symbols from which $MH_jC_j$ can be recovered.

Since $M H_{\cH}$ has full rank $h\ell$, each of its $h$ submatrices $M H_j$ ($j \in \cH$) must possess full column rank $\ell$. Therefore, the \emph{repair bandwidth} $\tB(M)$, defined as the total number of symbols transmitted over the network, is given by:
\begin{equation}\label{eq:BW}
    \tB(M) = \sum_{j\in[n]\setminus\cH} \rk{MH_j} = \sum_{j\in[n]} \rk{MH_j} - h\ell.
\end{equation}

For a given repair matrix $M$, we define the \emph{helper node set} $\cD$ as the subset of surviving nodes that actively participate in the repair process, i.e., $\cD = \{j\in[n]\setminus\cH: \rk{MH_j} > 0 \}$. The corresponding \emph{repair degree} $d$, representing the number of active helper nodes, is then $d = |\cD|$ and satisfies $k \le d \le n-h$.

Given a specific code $\cC$, a failure set $\cH$, and a designated set of helper nodes $\cD \subseteq [n]\setminus\cH$ with $|\cD| = d$, we are interested in valid repair matrices that require data downloads exclusively from $\cD$. Let $\cV(\cC, \cH, \cD)$ denote the subset of such valid repair matrices. Algebraically, any $M \in \cV(\cC, \cH, \cD)$ must satisfy the following conditions:
\begin{enumerate}
    \item $\rk{M H_{\cH}} = h\ell$;
    \item $\rk{M H_j} >0$ for all $ j\in\cD$;
    \item $ \rk{M H_j} =0$ for all $j\in[n]\setminus(\cH\cup\cD)$.
\end{enumerate}

We define $\beta(n,k,\ell,\Fq,h,d)$ as the minimum repair bandwidth required to repair $h$ failed nodes using exactly $d$ helper nodes under linear exact repair of $(n,k,\ell)$ MDS array codes over $\Fq$. Let $\mathfrak{C}$ denote the collection of all such MDS array codes. The fundamental limit $\beta$ is formally expressed as:
\begin{equation*}
    \beta(n,k,\ell,\Fq,h,d) := \min_{\cC\in\mathfrak{C}} \min_{\substack{\cH \subseteq [n] \\ |\cH| = h}} \min_{\substack{\cD \subseteq [n]\setminus\cH \\ |\cD| = d}} \min_{M \in \cV(\cC, \cH, \cD)} \tB(M).
\end{equation*}
If no $(n,k,\ell)$ MDS array code over $\mathbb F_q$ exists, we set $\beta(n,k,\ell,\mathbb F_q,h,d)=+\infty$.

We emphasize that $\beta$ is minimized over all codes, failure sets, helper sets, and valid repair matrices. Hence, any lower bound on $\beta$ applies universally to every linear repair instance with the prescribed parameters.

\begin{lemma}\label{lem:reduction_d}
The local-min repair bandwidth satisfies $$\beta(n,k,\ell,\Fq,h,d) \geq \beta(h+d, k, \ell,\Fq, h, d).$$
\end{lemma}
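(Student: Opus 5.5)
Take an optimal instance $(\cC,\cH,\cD,M)$ for the left-hand side and produce from it a valid instance for an $(h+d,k,\ell)$ MDS array code with the same bandwidth. Then $\beta(h+d,k,\ell,\Fq,h,d)$, being a minimum over all such instances, is at most $\tB(M)$. If no $(n,k,\ell)$ MDS code exists, the left side is $+\infty$ and there is nothing to prove. Also $h+d\le n$, and $h+d\ge k+h$ since $d\ge k$. The plan is to puncture $\cC$ to the coordinates $S=\cH\cup\cD$, i.e.\ keep only those nodes.

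\textbf{Step 1 (the punctured code is MDS).} Let $\cC'$ be the restriction of $\cC$ to the nodes in $S$. Any $k$ nodes of $\cC$ determine the whole codeword, so any $k$ nodes of $\cC'$ determine the rest of $\cC'$. The dimension of $\cC'$ is still $k\ell$, since some $k$ nodes in $S$ already carry full information. Hence $\cC'$ is an $(h+d,k,\ell)$ MDS array code with redundancy $r'=h+d-k$.

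\textbf{Step 2 (transfer the parity checks and the repair equation).} This is the main step, because $\cC'$ needs a parity-check description with $r'\ell\times\ell$ blocks $H'_j$, and $M$ must be replaced by a repair matrix $M'$ of size $h\ell\times r'\ell$.

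The key observation is that the row space of $MH$ (where $H=[H_1\cdots H_n]$) is supported on $S$, since $MH_j=0$ for $j\notin S$. So every row of $MH$ is a parity check of $\cC$ involving only the nodes in $S$, and is therefore a parity check of $\cC'$. The row space of a full parity-check matrix $H'=[H'_j]_{j\in S}$ of $\cC'$ is exactly the dual of $\cC'$, which has dimension $(h+d)\ell-k\ell=r'\ell$. Hence there is a matrix $M'$ with
\[
M'H'_j=MH_j \quad \text{for all } j\in S.
\]
Since $M H_{\cH}$ has full rank $h\ell$, we get $\rk{M'}=h\ell$, so $M'$ is full rank.

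Two sanity checks: the MDS criterion on the blocks $H'_j$ follows from Step 1 via the stated equivalence, and $h\le r'$ holds because $d\ge k$.

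\textbf{Step 3 (verify validity and equal bandwidth).} With $\cH'=\cH$ and $\cD'=\cD$, the three conditions defining $\cV(\cC',\cH,\cD)$ hold:
\begin{itemize}
\item $\rk{M'H'_{\cH}}=\rk{MH_{\cH}}=h\ell$;
\item $\rk{M'H'_j}=\rk{MH_j}>0$ for $j\in\cD$;
\item the third condition is vacuous, since $S\setminus(\cH\cup\cD)=\emptyset$.
\end{itemize}
By \eqref{eq:BW}, $\tB(M')=\sum_{j\in\cD}\rk{MH_j}=\tB(M)$.

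Minimizing over all instances on the left then gives
\[
\beta(n,k,\ell,\Fq,h,d)\ge \beta(h+d,k,\ell,\Fq,h,d).
\]
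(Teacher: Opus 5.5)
Your proof is correct and is essentially the paper's argument viewed from the primal side. The paper's reduced code, with blocks $H'_j = LH_j$ where the rows of $L$ span the left kernel of $H_{\cU}$ for $\cU=[n]\setminus(\cH\cup\cD)$, is exactly your punctured code $\cC|_{\cH\cup\cD}$, since $LH_{\cH\cup\cD}$ is a full-rank parity-check matrix for it; likewise, the paper's factorization $M=M'L$ is your observation that the rows of $MH$ are parity checks supported on $\cH\cup\cD$. The only difference is that the paper checks the MDS property directly, via invertibility of $[H_S\mid H_{\cU}]$, whereas you deduce it from the ``any $k$ nodes determine the codeword'' characterization and a dimension count.
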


\begin{proof}
Suppose $d < n - h$. Let $\cH$ be the set of $h$ failed nodes, $\cD$ be the set of $d$ helper nodes, and $\cU = [n] \setminus (\cH \cup \cD)$ be the set of non-helper nodes, where $|\cU| = n -  d-h \le r-h$.

Choose $\cC,\cH,\cD$, and $M$ attaining $\beta(n,k,\ell,\Fq,h,d)$. Let $H = [H_1 \mid H_2 \mid \dots \mid H_n]$ be the parity-check matrix of $\cC$, where each $H_j \in \Fq^{r\ell \times \ell}$.

For each non-helper node $j \in \cU$, the strict requirement $MH_j = \mathbf{0}$ implies $M H_{\cU} = \mathbf{0}$, where $H_{\cU} \in \Fq^{r\ell \times |\cU|\ell}$ is the block matrix formed by concatenating $H_j$ for all $j \in \cU$. Since $|\cU| \le r$ and $\cC$ is MDS, $H_{\cU}$ has full column rank.

Let $r' = r - |\cU| = h + d + r - n$. Since $H_{\cU}$ has full column rank, choose a full-rank matrix $L\in \Fq^{r'\ell \times r\ell}$ whose row space is the left kernel of $H_{\cU}$. Then $L H_{\cU} = \mathbf{0}$, and the right null space of $L$ coincides with the column space of $H_{\cU}$. We construct a reduced array code $\cC'$ with parameters $(h+d,k,\ell)$, and new parity-check submatrices defined as $H'_j = L H_j \in \Fq^{r'\ell \times \ell}$ for $j\in \cD \cup \cH$.

We first verify that $\cC'$ preserves the MDS property. Let $S \subseteq \cD \cup \cH$ be any subset of size $|S| = r'$, and let $H_S$ denote the concatenation of blocks $H_j$ for $j \in S$. The concatenated matrix $[H_S \mid H_{\cU}]$ contains $|S| + |\cU| = r' + (n - h - d) = r$ blocks, meaning it is an invertible $r\ell \times r\ell$ matrix. Suppose there exists a vector $\mathbf{x}$ such that $L H_S \mathbf{x} = \mathbf{0}$. This implies that the vector $H_S\mathbf{x}$ lies in the kernel of $L$, which coincides with the column space of $H_{\cU}$. Thus, there exists a vector $\mathbf{y}$ such that $H_S \mathbf{x} = H_{\cU} \mathbf{y}$, or equivalently, $[H_S \mid H_{\cU}] [\mathbf{x}^T, -\mathbf{y}^T]^T = \mathbf{0}$. Since $[H_S \mid H_{\cU}]$ is full rank, it necessitates $\mathbf{x} = \mathbf{0}$. Therefore, $H'_S = L H_S$ is full rank, confirming that $\cC'$ is an $(n',k= n'-r', \ell)$ MDS array code.

Finally, since $M H_{\cU} = \mathbf{0}$, the rows of $M$ reside in the left null space of $H_{\cU}$, which is spanned by the rows of $L$. Hence, there exists a matrix $M' \in \Fq^{h\ell \times r'\ell}$ such that $M = M' L$. Since $M'H'_{\cH}=MH_{\cH}$, $M'$ is a valid repair matrix for $\cC'$. For any helper node $j \in \cD$, the downloaded bandwidth in the reduced system is
\[
\rk{M' H'_j} = \rk{M' L H_j} = \rk{M H_j}.
\]
This demonstrates that the exact repair bandwidth achieved in $\cC'$ using $M'$ equals the original optimal bandwidth. Since the theoretical minimum bandwidth of the reduced system cannot exceed this specific achievable value, it follows that $\beta(n, k, \ell, \Fq, h, d) \geq \beta(n', k, \ell, \Fq, h, d)$.
\end{proof}

Lemma~\ref{lem:reduction_d} implies that, for establishing lower bounds, it suffices to focus on the scenario in which all helpers participate. Equivalently, after substituting $n$ with $h+d$, we can assume that every surviving node participates in the repair process, which implies $d = n - h$.

\section{Reduction to Mixed-Alphabet Codes}
\label{sec:reduc}

We now reduce a fixed linear repair scheme to feasibility constraints for mixed-alphabet codes. Throughout this section, we work in the full-helper case. Thus, a set $\cH\subseteq[n]$ of $h$ nodes fails, all surviving nodes participate in the repair, and
\[
    \cD=[n]\setminus \cH,\quad d=|\cD|=n-h.
\]
Fix an $(n,k,\ell)$ MDS array code $\cC$ over $\Fq$, a failure set $\cH$, and a valid repair matrix $M\in \cV(\cC,\cH,\cD)$. Let
\[
    H=[H_1\mid H_2\mid \cdots \mid H_n]
\]
be a parity-check matrix of $\cC$, where each $H_j\in \Fq^{r\ell\times \ell}$ and $r=n-k$.

Unless otherwise stated, the weight of a mixed-alphabet codeword means its block Hamming weight, namely the number of nonzero block coordinates.

\begin{definition}[Local repair profile]
\label{def:local-profile}
Fix a valid repair matrix $M\in \cV(\cC,\cH,\cD)$. For each helper node $j\in\cD$, define
$$
    \bA_j:=\ker(MH_j) = \{u\in\Fq^{\ell\times 1}:MH_ju=\bm 0\},
$$
and
$$
    \bB_j:=\operatorname{row}(MH_j) =  \{vMH_j:v\in\Fq^{1\times h\ell}\}.
$$
Let $a_j:=\dim_{\Fq}\bA_j$ and $b_j:=\dim_{\Fq}\bB_j.$
We call the two sequences $(a_j:j\in\cD)$ and $(b_j:j\in\cD)$ the local repair profiles induced by $M$.
\end{definition}

By rank-nullity, $a_j+b_j=\ell,\quad j\in\cD.$ Set $\alpha:=\sum_{j\in\cD}a_j.$ If $\beta=\tB(M)$ is the bandwidth of the repair scheme, then
\begin{equation}\label{eq:bw-relation}
    \beta
    = \sum_{j\in\cD}b_j
    = d\ell-\alpha .
\end{equation}

The spaces $\bA_j$ and $\bB_j$ encode complementary information: $\bA_j$ records the part of node $j$ not accessed by the repair matrix, while $\bB_j$ records the row space actually downloaded from node $j$. We next turn these local spaces into two auxiliary codes.

\subsection{The null-space auxiliary code}

\begin{definition}[Null-space code]
\label{def:code_A}
Let $\mathbb A:=\prod_{j\in\cD}\bA_j,$ then we define
\[
    \cA:=
    \left\{
        u=(u_j:j\in\cD)\in\mathbb A:
        \sum_{j\in\cD}H_ju_j=\bm 0
    \right\}.
\]
\end{definition}

\begin{lemma}[Properties of the null-space code]
\label{lem:code_A}
The code $\cA$ is an $\Fq$-linear mixed-alphabet code with coordinate alphabets $(\bA_j:j\in\cD)$. Moreover,
\[
    |\cA|\ge q^{\alpha-(r-h)\ell},
    \quad
    d_{\min}(\cA)\ge r+1.
\]
\end{lemma}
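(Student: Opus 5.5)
The plan is to realize $\cA$ as the kernel of a single $\Fq$-linear map and to read off both claims from that description. Consider
\[
    \phi:\mathbb A=\prod_{j\in\cD}\bA_j\to \Fq^{r\ell\times 1},\qquad \phi(u)=\sum_{j\in\cD}H_ju_j .
\]
Each $\bA_j$ is an $\Fq$-subspace of $\Fq^{\ell\times 1}$, so $\mathbb A$ is an $\Fq$-vector space of dimension $\sum_j a_j=\alpha$. The map $\phi$ is linear and $\cA=\ker\phi$. Hence $\cA$ is an $\Fq$-linear code whose $j$-th coordinate ranges over $\bA_j$, which is the first assertion.

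For the size bound, the key observation is that the image of $\phi$ is confined to a small subspace. For any $u\in\mathbb A$,
\[
    M\phi(u)=\sum_{j\in\cD}MH_ju_j=\bm 0,
\]
because $u_j\in\ker(MH_j)$ for every $j$. Thus $\operatorname{im}\phi\subseteq\ker M=\{x\in\Fq^{r\ell\times1}:Mx=\bm0\}$. Since $M$ is an $h\ell\times r\ell$ matrix of full rank $h\ell$, this kernel has dimension $(r-h)\ell$. Rank--nullity then gives
\[
    \dim\cA\ge\alpha-(r-h)\ell,
\]
and so $|\cA|\ge q^{\alpha-(r-h)\ell}$. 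This is the only step where the repair matrix enters. I expect it to be the main point to get right: the naive estimate $\operatorname{im}\phi\subseteq\Fq^{r\ell}$ would lose the $h\ell$ term, and the refinement uses exactly that $M$ has full row rank.

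For the distance bound, take a nonzero $u\in\cA$ and let $S=\{j\in\cD:u_j\neq\bm0\}$. Suppose for contradiction that $|S|\le r$. Then $\sum_{j\in S}H_ju_j=\bm0$ is a nontrivial linear dependence among the columns of the block matrix $H_S$. Since $n\ge r$, we can enlarge $S$ to a set $S'\subseteq[n]$ of exactly $r$ nodes. By the MDS property, $H_{S'}$ is invertible, so its submatrix $H_S$ has full column rank. This rules out the dependence, so $|S|\ge r+1$. Hence every nonzero codeword has block weight at least $r+1$, and by linearity $d_{\min}(\cA)\ge r+1$.
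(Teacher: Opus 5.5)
Your proposal is correct and follows the paper's proof essentially step for step: you realize $\cA$ as the kernel of the map $u\mapsto\sum_{j\in\cD}H_ju_j$, confine its image to $\ker M$ of dimension $(r-h)\ell$ to get the size bound, and use the MDS full-column-rank property of at most $r$ blocks for the distance bound. Your explicit extension of the support $S$ to an $r$-set of nodes only spells out the paper's appeal to ``every collection of at most $r$ parity-check blocks has full column rank.''
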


\begin{proof}
Define $\Phi:\mathbb A\to\Fq^{r\ell}$, and $\Phi(u)=\sum_{j\in\cD}H_ju_j.$
Then $\cA=\ker\Phi$. For every $u\in\mathbb A$,
\[
    M\Phi(u)=\sum_{j\in\cD}MH_ju_j=\bm 0,
\]
because $u_j\in\ker(MH_j)$. Hence
$
    \operatorname{Im}\Phi\subseteq \ker M.
$
Since $M$ has full row rank $h\ell$,
$
    \dim_{\Fq}(\ker M)=(r-h)\ell.
$
Therefore
\[
    \dim_{\Fq}\cA
    =
    \dim_{\Fq}\mathbb A-\dim_{\Fq}\operatorname{Im}\Phi
    \ge
    \alpha-(r-h)\ell.
\]

It remains to prove the distance bound. Suppose that $0\ne u\in\cA$ has support
$
    J=\{j\in\cD:u_j\ne\bm 0\}
$
with $|J|\le r$. Then
$
    \sum_{j\in J}H_ju_j=\bm 0.
$
Let $H_J$ be the concatenation of the blocks $H_j$ for $j\in J$, and let $u_J$ be the corresponding concatenated vector. Then
$
    H_Ju_J=\bm 0.
$
By the MDS property, every collection of at most $r$ parity-check blocks has full column rank. Hence $u_J=\bm 0$, contradicting the definition of $J$. Thus every nonzero codeword of $\cA$ has block Hamming weight at least $r+1$.
\end{proof}

The construction of $\cA$ does not require any relation between $d$ and $r$. In particular, if $d\le r$, then $d_{\min}(\cA)\ge r+1>d$, so $\cA=\{\bm 0\}$.

\subsection{The row-space auxiliary code}

The second auxiliary code uses the row spaces $(\bB_j:j\in\cD)$. Its strong size and distance guarantees require $d\ge r$.

\begin{definition}[Row-space code]
\label{def:code_B}
Let $ \mathbb B:=\prod_{j\in\cD}\bB_j.$
Define $\Phi_{\cB}:\Fq^{1\times h\ell}\to \mathbb B$ and $\Phi_{\cB}(x)=(xMH_j:j\in\cD).$
The row-space auxiliary code is
\[
    \cB:=\operatorname{Im}\Phi_{\cB}
    =
    \left\{
        (xMH_j:j\in\cD):x\in\Fq^{1\times h\ell}
    \right\}.
\]
\end{definition}

By definition, $\cB$ is an $\Fq$-linear mixed-alphabet code of length $d$ with coordinate alphabets $(\bB_j:j\in\cD)$. In general, $|\cB|\leq q^{h\ell}$.

\begin{lemma}[Properties of the row-space code]
\label{lem:code_B}
If $d\ge r$, then $|\cB|=q^{h\ell}$ and $d_{\min}(\cB)\ge d-r+1.$
\end{lemma}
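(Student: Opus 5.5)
Both claims follow from one observation: a nonzero $x$ cannot be annihilated by $MH_j$ on too many helper blocks. Concretely, I will show that for any nonzero $x\in\Fq^{1\times h\ell}$, the set $Z(x):=\{j\in\cD: xMH_j=\bm 0\}$ has size at most $r-1$. This gives $\mathrm{wt}(\Phi_{\cB}(x))=d-|Z(x)|\ge d-r+1$, which is positive because $d\ge r$. Since $\Phi_{\cB}$ is linear, positive weight for every nonzero $x$ means $\ker\Phi_{\cB}=\{0\}$. Hence $\Phi_{\cB}$ is injective, $|\cB|=q^{h\ell}$, and every nonzero codeword of $\cB$ has block weight at least $d-r+1$.

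To bound $|Z(x)|$, suppose for contradiction that $|Z(x)|\ge r$, and pick $J\subseteq Z(x)$ with $|J|=r$. Set $y:=xM\in\Fq^{1\times r\ell}$. Since $M$ has full row rank $h\ell$ and $x\neq 0$, we get $y\ne \bm 0$. By the choice of $J$, $yH_j=\bm 0$ for every $j\in J$, so $yH_J=\bm 0$. By the MDS property, the concatenation $H_J$ of any $r$ blocks is an invertible $r\ell\times r\ell$ matrix. Therefore $y=\bm 0$, a contradiction. Note that $J$ lies inside $\cD$, so the failed nodes play no role here; we only use that $|\cD|=d\ge r$ guarantees enough helpers.

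So $|Z(x)|\le r-1$ for every nonzero $x$, which gives both claims. The argument is short; the one point needing care is the transfer from $x\ne 0$ to $xM\ne 0$, which rests on $M$ having full row rank. That holds because $MH_{\cH}$ has rank $h\ell$, so $M$ itself has rank $h\ell$. The hypothesis $d\ge r$ is used only to make $d-r+1\ge 1$, so that the distance bound also yields injectivity.
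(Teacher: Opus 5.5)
Your proof is correct and is essentially the paper's argument: both use the fact that $xM\neq\bm 0$ cannot vanish on $r$ helper blocks, because any $r$ parity-check blocks form an invertible matrix. The only difference is cosmetic; you obtain injectivity from the weight bound $d-r+1\ge 1$, whereas the paper first proves injectivity separately using the same invertibility argument and then bounds the zero set.
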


\begin{proof}
We first prove that $\Phi_{\cB}$ is injective. Suppose $\Phi_{\cB}(x)=\bm 0$. Then $xMH_j=\bm 0, j\in\cD.$ Since $d\ge r$, choose $R\subseteq\cD$ with $|R|=r$. Let $H_R$ be the concatenation of the blocks $H_j$ for $j\in R$. Then $xMH_R=\bm 0. $ By the MDS property, $H_R$ is invertible. Hence $xM=\bm 0$. Since $M$ has full row rank $h\ell$, this forces $x=\bm 0$. Thus $\Phi_{\cB}$ is injective, and $|\cB|=q^{h\ell}.$

Now let $0\ne v_x=\Phi_{\cB}(x)\in\cB$, and set $y=xM\in\Fq^{1\times r\ell}.$ Since $M$ has full row rank, $y \neq \bm 0$. Let $Z=\{j\in\cD:yH_j=\bm 0\}$ be the set of zero block coordinates of $v_x$. If $|Z|\ge r$, choose $R\subseteq Z$ with $|R|=r$. Then $yH_R=\bm 0.$ Since $H_R$ is invertible, this implies $y=\bm 0$, a contradiction. Therefore $|Z|\le r-1$, and $\operatorname{wt}(v_x)=d-|Z|\ge d-r+1.$
\end{proof}

We summarize the reduction in the following theorem.

\begin{theorem}[Auxiliary-code reduction]
\label{thm:main-reduction}
Consider a centralized linear repair scheme with repair matrix $M$ and bandwidth $\beta$. Let $\alpha:=d\ell-\beta$. Then the scheme induces a null-space auxiliary code $\cA$ in Definition~\ref{def:code_A} satisfying
\[
    |\cA|\ge q^{\alpha-(r-h)\ell},
    \quad
    d_{\min}(\cA)\ge r+1 .
\]
If $d\ge r$, the same scheme further induces a row-space auxiliary code $\cB$ in Definition~\ref{def:code_B} satisfying
\[
    |\cB|=q^{h\ell},
    \quad
    d_{\min}(\cB)\ge d-r+1 .
\]
\end{theorem}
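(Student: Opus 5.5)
This theorem collects Lemma~\ref{lem:code_A} and Lemma~\ref{lem:code_B}, so the plan is to tie the repair bandwidth $\beta$ to the local repair profile and then invoke the two lemmas. We stay in the full-helper setting of this section, so $\cD=[n]\setminus\cH$ and $d=n-h$. For a general instance, Lemma~\ref{lem:reduction_d} first replaces the code by the reduced $(h+d,k,\ell)$ MDS array code. In that code $r$ is the reduced redundancy, and the repair matrix $M'$ has the same ranks $\rk{M'H'_j}=\rk{MH_j}$ on the helpers. The bandwidth is therefore unchanged.

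\textbf{Step 1: identify $\alpha$ with the profile.} By \eqref{eq:BW}, $\beta=\tB(M)=\sum_{j\in\cD}\rk{MH_j}=\sum_{j\in\cD}b_j$. Rank--nullity on the $h\ell\times\ell$ matrix $MH_j$ gives $a_j+b_j=\ell$. Hence $\alpha=d\ell-\beta=\sum_{j\in\cD}a_j$, which is the quantity $\alpha$ of Definition~\ref{def:local-profile}, exactly as in \eqref{eq:bw-relation}.

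\textbf{Step 2: the null-space code.} Build $\cA$ as in Definition~\ref{def:code_A}. Lemma~\ref{lem:code_A} gives $|\cA|\ge q^{\alpha-(r-h)\ell}$ and $d_{\min}(\cA)\ge r+1$. The size bound comes from $\cA=\ker\Phi$ with $\operatorname{Im}\Phi\subseteq\ker M$, where $\dim\ker M=(r-h)\ell$. The distance bound comes from the fact that any $r$ parity-check blocks have full column rank. Neither argument uses any relation between $d$ and $r$.

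\textbf{Step 3: the row-space code, for $d\ge r$.} Build $\cB$ as in Definition~\ref{def:code_B}. Lemma~\ref{lem:code_B} gives injectivity of $\Phi_{\cB}$, hence $|\cB|=q^{h\ell}$. It also gives $d_{\min}(\cB)\ge d-r+1$, since no nonzero $y=xM$ can vanish on $r$ blocks $H_R$ that together form an invertible matrix.

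No step is a real obstacle; the only point needing care is bookkeeping. The value $r$ in the bounds must be the redundancy of the code actually used, which in general is the reduced full-helper code from Lemma~\ref{lem:reduction_d}. Likewise, $\alpha$ defined from $\beta$ must coincide with $\sum_j a_j$, which holds because exactly the helpers in $\cD$ contribute to the bandwidth.
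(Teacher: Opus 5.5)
Your proposal is correct and matches the paper's proof: you identify $\alpha=d\ell-\beta$ with $\sum_{j\in\cD}a_j$ using rank--nullity and \eqref{eq:bw-relation}, then invoke Lemma~\ref{lem:code_A} and, when $d\ge r$, Lemma~\ref{lem:code_B}. Your extra bookkeeping about passing to the reduced code of Lemma~\ref{lem:reduction_d} is harmless but not needed, since the section already assumes the full-helper setting.
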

\begin{proof}
The identity $\alpha=d\ell-\beta$ follows from Definition~\ref{def:local-profile} and \eqref{eq:bw-relation}. The properties of $\cA$ follow from Lemma~\ref{lem:code_A}. If $d\ge r$, the properties of $\cB$ follow from Lemma~\ref{lem:code_B}.
\end{proof}

\begin{remark}
The condition $d\ge r$ in Lemma~\ref{lem:code_B} is necessary for the argument above, as the proof selects $r$ helper coordinates. In the common high-rate setting $r \le k$, this condition is satisfied automatically because $d \ge k$.
\end{remark}

\section{Combinatorial Profile Bounds}\label{sec:bound}

The reduction in Section~\ref{sec:reduc} converts every feasible repair scheme into auxiliary mixed-alphabet codes with prescribed alphabet profiles. Therefore, any upper bound on mixed-alphabet codes with a given distance becomes a necessary condition on feasible bandwidth profiles. This section records the resulting constraints from classical coding bounds. Table~\ref{tab:profile-known-bounds} summarizes how the constraints derived in this section recover several existing repair lower bounds as special cases or relaxations.

\begin{table}[ht]
\centering
\small
\caption{Classical coding bounds as profile constraints and their consequences.}
\label{tab:profile-known-bounds}
\begin{tabularx}{\linewidth}{@{}p{0.16\linewidth}p{0.30\linewidth}X@{}}
\toprule
Coding bound & Auxiliary-code constraint & Consequences \\
\midrule

Singleton
&
Subset-sum constraint for the null-space code $\cA$;
Lemma~\ref{lem:singleton}.
&
Summing over all $r$-subsets recovers the centralized cut-set bound
$\beta\ge dh\ell/r$~\cite{dimakis10,Centra-cut-set2018};
see Remark~\ref{rem:cut-set}.
\\
\addlinespace

Hamming
&
Sphere-packing constraints for $\cA$, and for $\cB$ when $d\ge r$;
Lemma~\ref{lem:hamming}.
&
The radius-one relaxation of the $\cA$-side constraint recovers the
projective-counting bound of~\cite{Wu26-1};
see Remark~\ref{rem:wu26-1}.
\\
\addlinespace

Plotkin
&
Average-weight constraints for $\cA$, and for $\cB$ when $d\ge r$;
Lemma~\ref{lem:plotkin}.
&
The $\cB$-side constraint, rewritten using $a_j+b_j=\ell$ and combined with Jensen's inequality, yields the Guruswami--Wootters-type bound~\cite{guruswami16}, while a linear relaxation recovers the incidence-multiplicity bound of \cite{Wu26-2}; see Remark~\ref{rem:guruswami-wu}.
\\
\addlinespace

Griesmer
&
Scalar Griesmer bound applied to the flattened code $\cB$;
Corollary~\ref{coro:griesmer}.
&
Gives an explicit lower bound
$\beta\ge
    \sum_{i=0}^{h\ell-1}
    \left\lceil\frac{d-r+1}{q^i}\right\rceil$.
\\

\bottomrule
\end{tabularx}
\end{table}

\subsection{Basic Profile Constraints}
\label{subsec:profile-constraints}

We first record three families of constraints obtained from standard coding bounds. The Singleton-type constraint comes from puncturing, the Hamming-type constraint from sphere packing, and the Plotkin-type constraint from double counting nonzero coordinates.

\begin{lemma}[Singleton-Type Constraint]\label{lem:singleton}
Assume $d\ge r$. For every subset $J\subseteq\cD$ with $|J|=r$, the profile $(a_j)_{j\in\cD}$ satisfies
$
    \sum_{j\in J}a_j \le (r-h)\ell.
$
Equivalently,
$
    \sum_{j\in J}b_j \ge h\ell.
$
\end{lemma}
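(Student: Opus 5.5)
The plan is to puncture the null-space code $\cA$ onto the complement of $J$ and combine the size bound of Lemma~\ref{lem:code_A} with its distance bound. The puncturing approach is as follows. Fix $J\subseteq\cD$ with $|J|=r$ and consider the restriction map
\[
\pi:\cA\to\prod_{j\in\cD\setminus J}\bA_j,\qquad u\mapsto (u_j:j\in\cD\setminus J).
\]
If $\pi(u)=\bm 0$, then $u$ is supported inside $J$, so its block weight is at most $r$. Since $d_{\min}(\cA)\ge r+1$ by Lemma~\ref{lem:code_A}, this forces $u=\bm 0$, so $\pi$ is injective. Hence
\[
\alpha-(r-h)\ell\le\dim\cA\le\sum_{j\notin J}a_j=\alpha-\sum_{j\in J}a_j,
\]
which gives $\sum_{j\in J}a_j\le (r-h)\ell$ directly.

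The equivalent form then follows from $a_j+b_j=\ell$ and $|J|=r$:
\[
\sum_{j\in J}b_j=r\ell-\sum_{j\in J}a_j\ge h\ell.
\]

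As a sanity check, the statement also follows directly from the MDS property. The subspace $\sum_{j\in J}H_j\bA_j$ lies in $\ker M$, which has dimension $(r-h)\ell$. Moreover, $H_J$ is invertible, so this sum is direct and has dimension $\sum_{j\in J}a_j$.

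There is no real obstacle here. The only point needing care is the injectivity of the puncturing map, and the hypothesis $d\ge r$ matters only to ensure that $r$-subsets $J$ exist.
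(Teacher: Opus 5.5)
Your proof is correct and is essentially the paper's argument: the paper projects $\cA$ onto $S=\cD\setminus J$ (with $|S|=d-r$), uses $d_{\min}(\cA)\ge r+1$ to get injectivity, and compares the result with $|\cA|\ge q^{\alpha-(r-h)\ell}$, exactly as you do. Your ``sanity check'' is also valid and slightly more direct: since $H_J$ is invertible, $\bigoplus_{j\in J}H_j\bA_j$ is a subspace of $\ker M$, which has dimension $(r-h)\ell$, so the bound follows without the auxiliary code $\cA$.
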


\begin{proof}
By Lemma~\ref{lem:code_A}, the code $\cA$ has minimum block distance at least $r+1$. Hence, the projection of $\cA$ onto any subset $S\subseteq\cD$ with $|S|=d-r$ is injective. Therefore
\[
    |\cA|\le \prod_{j\in S}q^{a_j}=q^{\sum_{j\in S}a_j}.
\]
Together with $|\cA|\ge q^{\alpha-(r-h)\ell}$, this gives
$
    \alpha-(r-h)\ell \le \sum_{j\in S}a_j.
$
Since $\alpha=\sum_{j\in\cD}a_j$, taking $J=\cD\setminus S$ gives
$
    \sum_{j\in J}a_j \le (r-h)\ell.
$
The equivalent formulation in terms of $(b_j)_{j\in\cD}$ follows from $a_j+b_j=\ell$.
\end{proof}

\begin{remark}\label{rem:cut-set}
Summing the inequality $\sum_{j\in J}b_j\ge h\ell$ over all $r$-subsets $J\subseteq\cD$ gives the centralized cut-set bound $\beta\ge \frac{dh\ell}{r}$ \cite{dimakis10,Centra-cut-set2018}.
\end{remark}

The next constraint strengthens the puncturing argument by using disjoint Hamming balls in the mixed-alphabet ambient spaces.

\begin{lemma}[Hamming-Type Constraint]\label{lem:hamming}
Let $e_A=\lfloor r/2\rfloor$, then
\[
    \sum_{i=0}^{e_A}
    \sum_{\substack{J\subseteq\cD:\ |J|=i}}
    \prod_{j\in J}(q^{a_j}-1)
    \le q^{(r-h)\ell}.
\]
Moreover, assuming $e_B=\lfloor(d-r)/2\rfloor\ge 0$, then
\[
    \sum_{i=0}^{e_B}
    \sum_{\substack{J\subseteq\cD:\ |J|=i}}
    \prod_{j\in J}(q^{b_j}-1)
    \le q^{\beta-h\ell}.
\]
\end{lemma}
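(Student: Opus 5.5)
We will prove both inequalities by sphere packing in the ambient spaces $\mathbb A=\prod_{j\in\cD}\bA_j$ and $\mathbb B=\prod_{j\in\cD}\bB_j$, using the block Hamming metric.

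\textbf{Ball sizes.} In a product of alphabets with $|\bA_j|=q^{a_j}$, the number of words of block weight exactly $i$ is $\sum_{|J|=i}\prod_{j\in J}(q^{a_j}-1)$. Indeed, one chooses the support $J$ and then a nonzero value in each $\bA_j$ for $j\in J$. Hence the ball of radius $e$ around any word has volume
\[
V_{\mathbb A}(e)=\sum_{i=0}^{e}\sum_{|J|=i}\prod_{j\in J}(q^{a_j}-1).
\]
This count is translation invariant, since $\mathbb A$ is a group and the metric is invariant under translation. The same formula with $b_j$ in place of $a_j$ gives $V_{\mathbb B}(e)$.

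\textbf{$\cA$-side.} By Lemma~\ref{lem:code_A}, $d_{\min}(\cA)\ge r+1>2e_A$. Therefore the balls of radius $e_A$ around the codewords of $\cA$ are pairwise disjoint subsets of $\mathbb A$, which gives
\[
|\cA|\,V_{\mathbb A}(e_A)\le |\mathbb A|=q^{\alpha}.
\]
Combining this with $|\cA|\ge q^{\alpha-(r-h)\ell}$ yields $V_{\mathbb A}(e_A)\le q^{\alpha-(\alpha-(r-h)\ell)}=q^{(r-h)\ell}$, which is the first claim.

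If $d<r+1$, then $\cA=\{\bm 0\}$ and the argument still goes through with $|\cA|=1$. The lower bound on $|\cA|$ then forces $\alpha\le (r-h)\ell$, and the packing inequality $V_{\mathbb A}(e_A)\le q^{\alpha}\le q^{(r-h)\ell}$ still holds. The only subtlety in this case is that $e_A$ may exceed $d$. The inner sum is then simply empty for $i>d$, and $V_{\mathbb A}(e_A)$ is just all of $\mathbb A$, of size $q^\alpha$.

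\textbf{$\cB$-side.} Assume $e_B\ge 0$, so that $d\ge r$. By Lemma~\ref{lem:code_B}, $|\cB|=q^{h\ell}$ and $d_{\min}(\cB)\ge d-r+1>2e_B$. The balls of radius $e_B$ around the codewords of $\cB$ are therefore disjoint in $\mathbb B$, whose size is
\[
|\mathbb B|=q^{\sum_j b_j}=q^{\beta}
\]
by \eqref{eq:bw-relation}. Hence $q^{h\ell}\,V_{\mathbb B}(e_B)\le q^{\beta}$, which gives $V_{\mathbb B}(e_B)\le q^{\beta-h\ell}$, the second claim.

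\textbf{Main obstacle.} There is no serious difficulty. The two points needing care are, first, checking that $2e+1\le d_{\min}$ holds in each case, which is immediate from the floor definitions. Second, one must confirm that the volume formula holds for mixed alphabets, including coordinates with $a_j=0$ or $b_j=0$. Such coordinates contribute a factor $q^0-1=0$, so any support $J$ containing them is correctly excluded from the count.
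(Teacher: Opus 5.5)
Your proposal is correct and matches the paper's proof: both use sphere packing with radius $e_A$ (resp.\ $e_B$) in the ambient space of size $q^{\alpha}$ (resp.\ $q^{\beta}$), combined with $|\cA|\ge q^{\alpha-(r-h)\ell}$ and $d_{\min}(\cA)\ge r+1$ (resp.\ $|\cB|=q^{h\ell}$ and $d_{\min}(\cB)\ge d-r+1$). Your added remarks on the degenerate case $d\le r$, and on the fact that $e_B\ge 0$ holds exactly when $d\ge r$, are accurate; they spell out points the paper leaves implicit.
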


\begin{proof}
The Hamming ball of radius $e_A$ in $\prod_{j\in\cD}\Fq^{a_j}$ has volume
\[
    V_A(e_A)=
    \sum_{i=0}^{e_A}
    \sum_{\substack{J\subseteq\cD:\ |J|=i}}
    \prod_{j\in J}(q^{a_j}-1).
\]
Since $d_{\min}(\cA)\ge r+1$, such balls around the codewords of $\cA$ are disjoint. Thus $|\cA|V_A(e_A)\le q^\alpha$. Using $|\cA|\ge q^{\alpha-(r-h)\ell}$ gives the first inequality. The second inequality follows in the same way from $|\cB|=q^{h\ell}$, $d_{\min}(\cB)\ge d-r+1$, and the ambient size $q^\beta$.
\end{proof}

\begin{remark}\label{rem:wu26-1}
For $h=1$ and $d=n-1$, keeping only the radius-one terms in the $\cA$-side Hamming constraint gives
\[
    1+\sum_{j\in\mathcal D}(q^{a_j}-1) \le q^{(r-1)\ell}.
\]
Equivalently,
\[
    \sum_{j\in\mathcal D}q^{a_j} \le q^{(r-1)\ell}+d-1.
\]
This recovers the projective-counting bound of~\cite{Wu26-1} in the present notation.
\end{remark}

We also need a Plotkin-type constraint. This form is useful when the distance requirement is large relative to the number of helper coordinates.

\begin{lemma}[Plotkin-Type Constraint]\label{lem:plotkin}
When $d\ge r$, the profile $(a_j)_{j\in\cD}$ satisfies
\[
    \sum_{j\in\cD} q^{a_j}
    \le
    (r-1)q^\ell+
    \frac{d-r+1}{q^{(h-1)\ell}}.
\]
The profile $(b_j)_{j\in\cD}$ satisfies
\[
    \sum_{j\in\cD} q^{b_j}
    \le
    (d-r-1)q^\ell+
    (r+1)q^{\beta-(k-1)\ell}.
\]
\end{lemma}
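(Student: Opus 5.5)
Both inequalities come from the Plotkin double-counting argument: sum the block weights of all codewords, first column by column and then row by row. The twist is that the constraint on $(a_j)$ comes from the row-space code $\cB$, and the constraint on $(b_j)$ comes from the null-space code $\cA$. The relation $a_j+b_j=\ell$ converts between the two.

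\emph{The $a$-side bound from $\cB$.} Assume $d\ge r$, so that Lemma~\ref{lem:code_B} applies. Fix $j\in\cD$. The coordinate map $x\mapsto xMH_j$ is a surjective linear map from $\Fq^{1\times h\ell}$ onto $\bB_j$, since $\bB_j$ is by definition its image. By Lemma~\ref{lem:code_B}, $\Phi_{\cB}$ is a bijection onto $\cB$. Hence exactly $q^{h\ell}(1-q^{-b_j})$ codewords of $\cB$ are nonzero at coordinate $j$. Summing over $j$ gives the total weight
\[
\sum_{c\in\cB}\operatorname{wt}(c)=q^{h\ell}\sum_{j\in\cD}\bigl(1-q^{-b_j}\bigr).
\]
Every nonzero codeword has weight at least $d-r+1$, so this total is at least $(q^{h\ell}-1)(d-r+1)$. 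Rearranging gives
\[
\sum_{j\in\cD} q^{-b_j}\le r-1+(d-r+1)q^{-h\ell}.
\]
Multiplying by $q^\ell$ and using $q^{\ell-b_j}=q^{a_j}$ yields the first inequality.

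\emph{The $b$-side bound from $\cA$.} Let $c_j\le a_j$ be the dimension of the projection of $\cA$ onto coordinate $j$. Since this projection is linear, exactly $|\cA|(1-q^{-c_j})\le|\cA|(1-q^{-a_j})$ codewords of $\cA$ are nonzero at coordinate $j$. By Lemma~\ref{lem:code_A}, every nonzero codeword of $\cA$ has weight at least $r+1$. Double counting therefore gives
\[
(|\cA|-1)(r+1)\le|\cA|\sum_{j\in\cD}\bigl(1-q^{-a_j}\bigr),
\]
and hence
\[
\sum_{j\in\cD} q^{-a_j}\le d-r-1+\frac{r+1}{|\cA|}.
\]
Next insert $|\cA|\ge q^{\alpha-(r-h)\ell}$ from Lemma~\ref{lem:code_A}. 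This is valid even when $\cA$ is trivial, because only an upper bound on $1/|\cA|$ is used. Then multiply by $q^\ell$. The exponent becomes
\[
\ell-\alpha+(r-h)\ell=\beta-(d-r+h-1)\ell=\beta-(k-1)\ell,
\]
using $\alpha=d\ell-\beta$ from \eqref{eq:bw-relation} and $d+h=n$ in the full-helper setting. This gives the second inequality.

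\emph{Main obstacle.} The delicate point is the exact count of codewords nonzero at a given coordinate. For $\cB$ we need equality, which holds because the coordinate projection is onto $\bB_j$ and $\Phi_{\cB}$ is injective. For $\cA$ only the inequality direction is needed, which holds because its projection may be a proper subspace of $\bA_j$. Both facts have been checked above, so the remaining work is just the rearrangement.
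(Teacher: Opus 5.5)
Your proof is correct and matches the paper's argument: Plotkin double counting on $\cB$ (with the exact column count $q^{h\ell}(1-q^{-b_j})$ coming from injectivity of $\Phi_{\cB}$) gives the $(a_j)$ bound, and double counting on $\cA$ with $c_j\le a_j$ and $|\cA|\ge q^{\alpha-(r-h)\ell}$ gives the $(b_j)$ bound after substituting $\alpha=d\ell-\beta$ and $d+h=n$. Your extra justifications, that the column count for $\cB$ is exact and that the trivial-$\cA$ case is harmless, are accurate; the paper uses both steps without comment.
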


\begin{proof}
For $\cB$, let $S_B=\sum_{v\in\cB}\operatorname{wt}(v)$. Since $d_{\min}(\cB)\ge d-r+1$,
\[
    S_B\ge (q^{h\ell}-1)(d-r+1).
\]
On the other hand, the $j$-th coordinate projection of $\cB$ has dimension $b_j$, so
\[
    S_B=q^{h\ell}\sum_{j\in\cD}(1-q^{-b_j}).
\]
Combining these estimates gives
\[
    \sum_{j\in\cD}q^{-b_j} \le r-1+(d-r+1)q^{-h\ell}.
\]
Using $b_j=\ell-a_j$ gives the first inequality.

For $\cA$, let $c_j:=\dim \operatorname{proj}_j(\cA)\le a_j$. Then
\[
       \sum_{u\in\cA}\operatorname{wt}(u)
    = |\cA|\sum_{j\in\mathcal D}(1-q^{-c_j})
    \le |\cA|\sum_{j\in\mathcal D}(1-q^{-a_j}).
\]
Since $d_{\min}(\cA)\ge r+1$,
\[
    (|\cA|-1)(r+1) \le |\cA|\sum_{j\in\mathcal D}(1-q^{-a_j}).
\]
Since $|\cA|\ge q^{\alpha-(r-h)\ell}$, we obtain
\[
    \sum_{j\in\cD}q^{-a_j} \le d-r-1+(r+1)q^{(r-h)\ell-\alpha}.
\]
Substituting $a_j=\ell-b_j$, $\alpha=d\ell-\beta$, and $d=n-h$ gives the second inequality.
\end{proof}

\begin{remark}\label{rem:guruswami-wu}
For $h=1$ and $d=n-1$, the first Plotkin-type constraint recovers the lower bound in~\cite{guruswami16} via Jensen's inequality. The weaker linear relaxation $q^t\ge 1+(q-1)t$ for integers $t\ge 0$ also recovers the corresponding bound in~\cite{Wu26-2}.
\end{remark}

\subsection{Representative Lower Bounds on Repair Bandwidth}
\label{subsec:representative-lower-bounds}

We next translate the profile constraints above into explicit lower bounds on $\beta$. We state only representative closed-form consequences.

The first consequence follows from applying the Griesmer bound to the flattened row-space code $\cB$.

\begin{corollary}[Griesmer-Type Lower Bound]\label{coro:griesmer}
Assume $d\ge r$. Then every centralized linear repair scheme satisfies
\[
    \beta \ge
    \sum_{i=0}^{h\ell-1}
    \left\lceil \frac{d-r+1}{q^i}\right\rceil .
\]
\end{corollary}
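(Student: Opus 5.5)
The plan is to flatten the row-space code $\cB$ of Definition~\ref{def:code_B} into an ordinary $\Fq$-linear block code and apply the classical Griesmer bound. By Lemma~\ref{lem:code_B}, since $d\ge r$, $\cB$ has $|\cB|=q^{h\ell}$ and block distance $d_{\min}(\cB)\ge d-r+1$. For each helper $j\in\cD$, fix an $\Fq$-basis of $\bB_j$; this gives a linear isomorphism $\bB_j\cong\Fq^{b_j}$. Replacing each block coordinate $xMH_j$ by its $b_j$ coordinates in this basis yields a linear map
\[
\Psi:\Fq^{1\times h\ell}\to \Fq^{\sum_{j\in\cD} b_j}=\Fq^{\beta}.
\]
Here the length is $\beta$ by \eqref{eq:bw-relation}. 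Let $\cB^{\flat}=\operatorname{Im}\Psi$. The map $\Psi$ is the composition of the injective map $\Phi_{\cB}$ with coordinatewise isomorphisms, so $\cB^{\flat}$ is an $[\beta, h\ell]$ linear code over $\Fq$.

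The next step is the distance of $\cB^{\flat}$. A nonzero block coordinate of a codeword of $\cB$ contributes at least one nonzero scalar coordinate after flattening. Therefore the scalar Hamming weight of $\Psi(x)$ is at least the block weight of $\Phi_{\cB}(x)$, which is at least $d-r+1$ whenever $x\neq 0$. Hence $\cB^{\flat}$ is an $[\beta,h\ell,\ge d-r+1]_q$ linear code.

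Finally, we apply the Griesmer bound: any $[N,K,D]_q$ linear code satisfies $N\ge\sum_{i=0}^{K-1}\lceil D/q^i\rceil$. The right-hand side is monotone nondecreasing in $D$, so we may substitute the lower bound $D\ge d-r+1$. With $N=\beta$ and $K=h\ell$ this gives the stated inequality.

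Each step is routine, so there is no real obstacle. The only point needing care is that the flattened code has exact dimension $h\ell$ and length exactly $\beta$. Both follow from the injectivity of $\Phi_{\cB}$ in Lemma~\ref{lem:code_B} and from $\dim\bB_j=b_j$. The hypothesis $d\ge r$ is used precisely to invoke that lemma.
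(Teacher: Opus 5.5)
Your proposal is correct and follows the paper's own proof: flatten $\cB$ into a scalar $[\beta,h\ell,\ge d-r+1]_q$ code using Lemma~\ref{lem:code_B}, then apply the classical Griesmer bound. Your explicit remarks on injectivity (giving exact dimension $h\ell$) and on the monotonicity of the Griesmer sum in $D$ spell out details the paper leaves implicit.
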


\begin{proof}
Flatten each block $\Fq^{b_j}$ of $\cB$ into $b_j$ scalar coordinates over $\Fq$. The resulting scalar linear code has length $\sum_{j\in\cD} b_j=\beta$, dimension $h\ell$, and minimum scalar distance at least $d-r+1$. The standard Griesmer bound gives the result.
\end{proof}

The Hamming-type constraints give several simple closed-form consequences.

\begin{corollary}[Hamming-Type Lower Bound]\label{coro:hamming-average}
Let $e=\lfloor r/2\rfloor$. If $e\ge 1$, then every centralized linear repair scheme satisfies
\[
    \beta \ge d\ell- \frac{d}{e}\log_q\left(1+\frac{e}{d}\bigl(q^{(r-h)\ell}-1\bigr)\right).
\]
\end{corollary}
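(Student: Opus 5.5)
The plan is to start from the $\cA$-side Hamming-type constraint of Lemma~\ref{lem:hamming}, discard every term except the ones we can control, and then convexify in the profile $(a_j)$. Specifically, since $e=\lfloor r/2\rfloor\ge 1$ and all summands are nonnegative, we keep the radius-$0$ term together with the radius-$e$ terms:
\[
    1+\sum_{J\subseteq\cD,\ |J|=e}\prod_{j\in J}(q^{a_j}-1)\le q^{(r-h)\ell}.
\]
This particular choice is tentative. The target bound has the form $1+\frac{e}{d}(\cdot)$, which suggests that a single-radius, symmetrized quantity is what is needed.

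I will first try the alternative that seems more natural for the shape of the target. Apply the radius-$e$ ball bound coordinatewise through an averaging (AM--GM) argument to obtain an inequality of the form
\[
    \sum_{j\in\cD} q^{a_j}\lesssim d+\frac{d}{e}\bigl(q^{(r-h)\ell}-1\bigr)
\]
after normalization. Then use convexity of $x\mapsto q^x$ (Jensen) to get
\[
    d\,q^{\alpha/d}\le \sum_j q^{a_j}.
\]
Taking $\log_q$ gives
\[
    \alpha\le \frac{d}{e}\log_q\Bigl(1+\frac{e}{d}\bigl(q^{(r-h)\ell}-1\bigr)\Bigr)
\]
with the factor $d/e$ coming out correctly, and then \eqref{eq:bw-relation}, $\beta=d\ell-\alpha$, finishes.

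The arithmetic check shows which route is right. The target rearranges to
\[
    q^{e\alpha/d}\le 1+\frac{e}{d}\bigl(q^{(r-h)\ell}-1\bigr).
\]
So the real intermediate claim to establish is
\[
    q^{e\alpha/d}-1\le \frac{e}{d}\Bigl(V_A(e)-1\Bigr),\qquad\text{where }V_A(e)-1\le q^{(r-h)\ell}-1.
\]
I plan to prove this by lower-bounding the ball volume in terms of the geometric mean $g=q^{\alpha/d}$. Because the ball volume is Schur-convex in the profile, we have $V_A(e)\ge\sum_{i\le e}\binom{d}{i}(g-1)^i$. Alternatively, use the elementary inequality $\sum_{i\le e}\binom{d}{i}x^i\ge \frac{d}{e}\bigl((1+x)^e-1\bigr)+1$ for $x\ge 0$ with $1\le e\le d$. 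The latter follows from comparing coefficients: $\binom{d}{i}\ge\frac{d}{e}\binom{e}{i}$ for $1\le i\le e$. I verify this by writing $\binom{d}{i}/\binom{e}{i}=\prod_{t=0}^{i-1}\frac{d-t}{e-t}\ge \frac{d}{e}$, since every factor is at least $1$ and the first factor equals $d/e$. Combining this with the Hamming constraint and $1+x=g$ yields
\[
    \frac{d}{e}\bigl(q^{e\alpha/d}-1\bigr)\le q^{(r-h)\ell}-1,
\]
which is exactly the claim.

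The main obstacle is the step that replaces the actual profile $(a_j)$ by its geometric mean inside the ball volume. For $V_A(e)$ we need the elementary symmetric sums $\sum_{|J|=i}\prod_{j\in J}(q^{a_j}-1)$ to be bounded below by $\binom{d}{i}(g-1)^i$. I plan to get this from Maclaurin's inequality: $e_i(y)/\binom{d}{i}\ge \bigl(\prod_j y_j\bigr)^{i/d}$ with $y_j=q^{a_j}-1$. Then I need $\prod_j(q^{a_j}-1)^{1/d}\ge g-1$, which is the superadditivity of the geometric mean, $\prod(x_j+1)^{1/d}\ge \prod x_j^{1/d}+1$, but in the wrong direction. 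So the fallback is to handle the case where some $a_j=0$ separately, or to use directly the convexity of $\log(1+\sum\cdots)$ in $(a_j)$ via Schur-convexity. I would check that the relevant symmetric function of $(q^{a_j})$ is Schur-convex in $(a_j)$, so that its minimum at fixed $\alpha$ is attained at the equal profile $a_j=\alpha/d$.
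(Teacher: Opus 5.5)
Your third route works and differs from the paper's, but its crux is only asserted. You justify $V_A(e)\ge\sum_{i\le e}\binom{d}{i}(g-1)^i$ with $g=q^{\alpha/d}$ by ``the ball volume is Schur-convex in the profile,'' and your last paragraph concedes this is unchecked. It is not a citable standard fact. The elementary symmetric polynomials are Schur-\emph{concave} in their arguments $y_j=q^{a_j}-1$. Your Maclaurin attempt fails exactly because $a\mapsto\log(q^a-1)$ is concave, which forces $\prod_j(q^{a_j}-1)^{1/d}\le g-1$.

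The claim is nevertheless true, and a two-variable smoothing step proves it. Fix $a_3,\dots,a_d$ and $a_1+a_2=c$. Let $\sigma_i$ denote the degree-$i$ elementary symmetric polynomial of $z=(y_3,\dots,y_d)$, and put $p=q^{a_1}+q^{a_2}$. Expanding in $y_1,y_2$ and using $y_1+y_2=p-2$ and $y_1y_2=q^c+1-p$ gives
\[
    V_A(e)=\mathrm{const}+p\,\sigma_{e-1}(z),\qquad \sigma_{e-1}(z)\ge 0 .
\]
So replacing $a_1,a_2$ by their average, which minimizes $p$, cannot increase $V_A(e)$. Equivalently, $\partial_{a_1}V_A(e)-\partial_{a_2}V_A(e)=(\ln q)(y_1-y_2)\sigma_{e-1}(z)$, and the Schur--Ostrowski criterion gives the comparison with the constant profile $a_j=\alpha/d$.

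With this lemma in place, your coefficient comparison $\binom{d}{i}\ge\frac{d}{e}\binom{e}{i}$ finishes the proof; like the paper's argument, it needs $e\le d$. Drop the first two routes. In particular, the second does not produce the factor $d/e$: Jensen applied to $\sum_j q^{a_j}\le d+\frac{d}{e}(q^{(r-h)\ell}-1)$ only gives $\alpha\le d\log_q\bigl(1+\frac{1}{e}(q^{(r-h)\ell}-1)\bigr)$.

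The paper avoids any convexity property of the ball volume. For each $e$-subset $J\subseteq\cD$ it writes $q^{\sum_{j\in J}a_j}-1=\sum_{\emptyset\ne S\subseteq J}\prod_{j\in S}(q^{a_j}-1)$ and sums over all $J$. A set $S$ of size $s$ is counted $\binom{d-s}{e-s}\le\binom{d-1}{e-1}$ times, so the Hamming constraint gives $\sum_{|J|=e}q^{\sum_{j\in J}a_j}\le\binom{d}{e}+\binom{d-1}{e-1}\bigl(q^{(r-h)\ell}-1\bigr)$. AM--GM over the $\binom{d}{e}$ subsets then bounds the left side below by $\binom{d}{e}q^{e\alpha/d}$.

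The paper's multiplicity inequality is exactly your $\binom{d}{s}\ge\frac{d}{e}\binom{e}{s}$, via $\binom{d}{s}\binom{d-s}{e-s}=\binom{d}{e}\binom{e}{s}$. So the two proofs differ only in where the profile is averaged. The paper averages exponentials of linear forms, where AM--GM is immediate; you average the ball volume itself. That costs you the smoothing lemma, but it buys the sharper intermediate inequality $\sum_{i\le e}\binom{d}{i}\bigl(q^{\alpha/d}-1\bigr)^i\le q^{(r-h)\ell}$, of which the stated corollary is a relaxation.
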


\begin{proof}
By Lemma~\ref{lem:hamming},
\[
    \sum_{s=1}^{e} \sum_{\substack{S\subseteq\cD:\ |S|=s}} \prod_{j\in S}(q^{a_j}-1) \le q^{(r-h)\ell}-1.
\]
For each $J\subseteq\cD$ with $|J|=e$,
\[
    q^{\sum_{j\in J}a_j}-1 =\sum_{\emptyset\ne S\subseteq J}\prod_{j\in S}(q^{a_j}-1).
\]
Summing over all such $J$ and using $\binom{d-s}{e-s}\le\binom{d-1}{e-1}$ gives
\[
        \sum_{\substack{J\subseteq\cD:\ |J|=e}} q^{\sum_{j\in J}a_j}
    \le \binom{d}{e}+\binom{d-1}{e-1}\bigl(q^{(r-h)\ell}-1\bigr).
\]
By AM-GM, the left-hand side is at least $\binom{d}{e}q^{e\alpha/d}$. Therefore
\[
    q^{e\alpha/d} \le 1+\frac{e}{d}\bigl(q^{(r-h)\ell}-1\bigr).
\]
Substituting $\alpha=d\ell-\beta$ gives the result.
\end{proof}

Finally, the first Plotkin-type constraint gives a clean Jensen-type lower bound.

\begin{corollary}[Plotkin-Jensen Lower Bound]\label{coro:plotkin-jensen}
Assume that $d\ge r$. Then every centralized linear repair scheme satisfies
\[
    \beta \ge
    dh\ell-d\log_q\left(
        1+\frac{(r-1)(q^{h\ell}-1)}{d}
    \right).
\]
\end{corollary}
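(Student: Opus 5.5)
The plan is to start from the $\cB$-side Plotkin computation in the proof of Lemma~\ref{lem:plotkin}, in its intermediate form before the substitution $b_j=\ell-a_j$:
\[
    \sum_{j\in\cD}q^{-b_j} \le r-1+(d-r+1)q^{-h\ell}.
\]
This inequality is valid under the standing assumption $d\ge r$, since it only uses $|\cB|=q^{h\ell}$ and $d_{\min}(\cB)\ge d-r+1$ from Lemma~\ref{lem:code_B}. The idea is to bound the left-hand side from below by a function of $\beta=\sum_{j\in\cD}b_j$ alone, using convexity.

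The function $x\mapsto q^{-x}$ is convex on $\mathbb{R}$. By Jensen's inequality, or equivalently AM--GM applied to the numbers $q^{-b_j}$, we get
\[
    \sum_{j\in\cD}q^{-b_j}\ \ge\ d\,q^{-\frac1d\sum_{j}b_j}=d\,q^{-\beta/d}.
\]
Combining this with the Plotkin inequality gives
\[
    q^{-\beta/d}\le \frac{(r-1)+(d-r+1)q^{-h\ell}}{d}.
\]

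Next, multiply the numerator and denominator by $q^{h\ell}$ and take $\log_q$. This yields
\[
    \beta\ \ge\ dh\ell-d\log_q\!\left(\frac{(r-1)q^{h\ell}+d-r+1}{d}\right).
\]
Finally, rewrite the numerator as $(r-1)q^{h\ell}+d-r+1=d+(r-1)(q^{h\ell}-1)$. Dividing by $d$ gives exactly the stated expression $1+\frac{(r-1)(q^{h\ell}-1)}{d}$.

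There is no real obstacle here. The only point needing care is to use the $b$-form of the Plotkin inequality rather than the $a$-form stated in Lemma~\ref{lem:plotkin}. The $a$-form bounds $\sum_j q^{a_j}$ from above, and Jensen's inequality bounds that sum from below by $d\,q^{\alpha/d}$. This would still be workable, but the $b$-form leads directly to the stated closed form.
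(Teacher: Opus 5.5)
Your proof is correct and is essentially the paper's argument. The paper applies Jensen as $\sum_{j\in\cD}q^{a_j}\ge d\,q^{\alpha/d}$ to the $a$-form of Lemma~\ref{lem:plotkin} and substitutes $\alpha=d\ell-\beta$, which is exactly your $b$-form computation multiplied through by $q^{\ell}$, so the $a$-form is just as direct and no care is needed in choosing between the two.
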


\begin{proof}
By Jensen's inequality,
\[
    \sum_{j\in\cD}q^{a_j} \ge d q^{\alpha/d}.
\]
Combining this with the first Plotkin-type constraint in Lemma~\ref{lem:plotkin}, and then substituting $\alpha=d\ell-\beta$, gives the claimed lower bound.
\end{proof}

\begin{remark}
 Numerous results can be obtained from combinatorial constraints. Since those consequences are not uniformly valid as clean lower bounds and are usually better handled numerically, we do not state those as separate corollaries.
\end{remark}

The inequalities derived in this section are clear and easy to evaluate. For tighter bounds that are valid at finite parameters, we will use the type-compressed linear program presented in the following section.

\section{Linear-Programming Bounds}
\label{sec:LP}

We use Delsarte's linear programming method in a form adapted to mixed-alphabet codes. Given a coordinate-dimension profile, the LP upper-bounds the size of any mixed-alphabet code with prescribed block distance. Combined with the auxiliary codes from Section~\ref{sec:reduc}, this gives lower bounds on repair bandwidth.

Throughout this section, we work in the full-helper setting:
\[
    r=n-k,\quad
    \cD=[n]\setminus\cH,\quad
    d=|\cD|=n-h.
\]
Recall that
\[
    b_j=\dim\bB_j\in\{1,\ldots,\ell\},
    \quad
    a_j=\dim\bA_j=\ell-b_j\in\{0,\ldots,\ell-1\},
\]
and
\[
    \beta=\sum_{j\in\cD}b_j = d\ell-\sum_{j\in\cD}a_j.
\]
Thus, the bandwidth profile has no zero-dimensional coordinates, whereas the complementary null-space profile may have zero-dimensional coordinates.

\subsection{A mixed-alphabet LP method}
\label{subsec:LP-framework}

Let
\[
    \bV=\prod_{i\in\cD}\bV_i,
    \quad
    \dim_{\Fq}\bV_i=t_i\ge 1.
\]
Let $\cR\subseteq\bV$ be an $\Fq$-linear code with minimum block distance at least $w$.

For $v\in\bV$, write
$
    \supp{v}=\{i\in\cD:v_i\ne \bm 0\}.
$
For $S\subseteq\cD$, define
\[
    E_S=\bigl|\{v\in\cR:\supp{v}=S\}\bigr|.
\]
Then
\[
    E_\emptyset=1, \quad E_S=0\quad\text{for }0<|S|<w.
\]

Let $\widehat{\bV}_i$ be the character group of the additive group $\bV_i$, and let $\cE_i$ denote the trivial character. A character $\chi\in\widehat{\bV}:=\prod_{i\in\cD}\widehat{\bV}_i$ decomposes as $\chi=\prod_i\chi_i$, and its support is
\[
    \supp{\chi}=\{i\in\cD:\chi_i\ne\cE_i\}.
\]
The dual code is
\[
    \cR^\perp = \{\chi\in\widehat{\bV}:\chi(v)=1\text{ for all }v\in\cR\}.
\]
Let
$
    E_S^\perp = \bigl|\{\chi\in\cR^\perp:\supp{\chi}=S\}\bigr|.
$
For $T,S\subseteq\cD$, define
\[
    K_T(S) := \sum_{\substack{\chi\in\widehat{\bV}\\ \supp{\chi}=S}}\chi(v),
\]
where $v$ is any vector satisfying $\supp{v}=T$. Thus, the first argument of $K_T(S)$ is a codeword support, and the second argument is a character support. The value depends only on $T$ and $S$, not on the choice of $v$.

\begin{lemma}[Support kernel]
\label{lem:support-kernel}
For all $T,S\subseteq\cD$,
\[
    K_T(S) = \left(\prod_{i\in S\setminus T}(q^{t_i}-1)\right) (-1)^{|S\cap T|}.
\]
\end{lemma}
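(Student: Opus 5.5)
The plan is to use the product structure of the character group and reduce $K_T(S)$ to a product of one-coordinate sums. Fix $v$ with $\supp{v}=T$. A character $\chi=\prod_{i\in\cD}\chi_i$ has support exactly $S$ if and only if $\chi_i=\cE_i$ for $i\notin S$ and $\chi_i\ne\cE_i$ for $i\in S$. Also $\chi(v)=\prod_{i}\chi_i(v_i)$. Hence the sum defining $K_T(S)$ factors as
\[
    K_T(S)=\prod_{i\notin S}\cE_i(v_i)\cdot\prod_{i\in S}\Bigl(\sum_{\chi_i\in\widehat{\bV}_i\setminus\{\cE_i\}}\chi_i(v_i)\Bigr)
    =\prod_{i\in S}\Bigl(\sum_{\chi_i\ne\cE_i}\chi_i(v_i)\Bigr),
\]
since trivial characters evaluate to $1$.

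Next, evaluate each local factor with the orthogonality relation for the finite abelian group $\bV_i$, which has order $q^{t_i}$. Summing over all characters gives
\[
\sum_{\chi_i\in\widehat{\bV}_i}\chi_i(x)=\begin{cases} q^{t_i}, & x=\bm 0,\\ 0, & x\ne\bm 0.\end{cases}
\]
Subtracting the contribution $1$ of the trivial character, the local factor equals $q^{t_i}-1$ when $v_i=\bm 0$, that is, when $i\in S\setminus T$. It equals $-1$ when $v_i\ne\bm 0$, that is, when $i\in S\cap T$. Multiplying over $i\in S$ gives $\prod_{i\in S\setminus T}(q^{t_i}-1)\cdot(-1)^{|S\cap T|}$. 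Because each factor depends on $v_i$ only through whether $v_i=\bm 0$, the value is independent of the choice of $v$; this justifies the well-definedness claim made just before the lemma.

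The only point needing care is the orthogonality step for $x\ne\bm 0$. There I would pick a character $\psi\in\widehat{\bV}_i$ with $\psi(x)\ne1$; one exists because the character group of a finite abelian group separates points. Then $\psi(x)\sum_{\chi}\chi(x)=\sum_{\chi}(\psi\chi)(x)=\sum_{\chi}\chi(x)$, which forces the sum to vanish. Everything else in the argument is bookkeeping.
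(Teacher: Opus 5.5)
Your proof is correct and matches the paper's argument: you factor the character sum over coordinates, note that trivial local characters contribute $1$, and use orthogonality to get $q^{t_i}-1$ on $S\setminus T$ and $-1$ on $S\cap T$. Your version also proves the orthogonality relation and the independence from the choice of $v$, which the paper only asserts.
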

\begin{proof}
The character sum factors over coordinates. If $i\notin S$, the local character is trivial and contributes $1$. If $i\in S\setminus T$, then $v_i=0$, and the sum over nontrivial characters contributes $q^{t_i}-1$. If $i\in S\cap T$, then $v_i\ne0$, and character orthogonality gives
$
    \sum_{\chi_i\ne\cE_i}\chi_i(v_i)=-1.
$
Multiplying these local contributions over all coordinates gives the formula.
\end{proof}

\begin{lemma}[MacWilliams transform by supports]
\label{lem:Enum_perp}
For every $S\subseteq\cD$,
\[
    E_S^\perp
    =
    \frac{1}{|\cR|}
    \sum_{T\subseteq\cD}E_TK_T(S).
\]
\end{lemma}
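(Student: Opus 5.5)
The plan is the standard Fourier (Poisson summation) argument for additive codes, with the character sum organized by supports.

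\emph{Step 1 (orthogonality).} Since $\cR$ is an $\Fq$-linear subspace of $\bV$, it is an additive subgroup. Hence for every $\chi\in\widehat{\bV}$,
\[
    \sum_{v\in\cR}\chi(v)=
    \begin{cases}
        |\cR|, & \chi\in\cR^\perp,\\
        0, & \text{otherwise}.
    \end{cases}
\]
This holds because the restriction of $\chi$ to $\cR$ is a character of the finite group $\cR$. It is trivial exactly when $\chi\in\cR^\perp$, and the sum of a nontrivial character over a finite group vanishes.

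\emph{Step 2 (double counting).} Fix $S\subseteq\cD$ and evaluate
\[
    \Sigma_S:=\sum_{\substack{\chi\in\widehat{\bV}\\ \supp{\chi}=S}}\ \sum_{v\in\cR}\chi(v)
\]
in two ways.

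Summing over $v$ first, Step 1 gives $\Sigma_S=|\cR|\,E_S^\perp$, since only characters in $\cR^\perp$ with support $S$ contribute.

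Summing over $\chi$ first, we partition $\cR$ by support: $\cR=\bigsqcup_{T\subseteq\cD}\{v\in\cR:\supp{v}=T\}$. For each $v$ with $\supp{v}=T$, the inner sum $\sum_{\supp{\chi}=S}\chi(v)$ equals $K_T(S)$. The one point needing justification is that this value depends only on $T$, not on $v$. That is exactly the factorization in Lemma~\ref{lem:support-kernel}, whose value is a function of $(T,S)$ alone. Therefore $\Sigma_S=\sum_{T}E_T K_T(S)$.

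\emph{Step 3 (conclude).} Equating the two expressions for $\Sigma_S$ and dividing by $|\cR|\ge 1$ gives the claimed identity.

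There is no real obstacle. The only point needing care is that the support of a character, $\supp{\chi}$, is defined coordinatewise through the product decomposition $\widehat{\bV}=\prod_i\widehat{\bV}_i$. This makes the set $\{\chi:\supp{\chi}=S\}$ a product of sets of local characters, so the factorization in Lemma~\ref{lem:support-kernel} applies directly.
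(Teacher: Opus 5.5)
Your proposal is correct and takes the same route as the paper's proof. Both evaluate the double sum over characters of support $S$ and codewords of $\cR$ in two ways: once via character orthogonality, giving $|\cR|E_S^\perp$, and once by grouping codewords by support, giving $\sum_T E_T K_T(S)$.
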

\begin{proof}
Evaluate $
    \eta
    =
    \sum_{\substack{\chi\in\widehat{\bV}\\ \supp{\chi}=S}}
    \sum_{v\in\cR}\chi(v)
$ in two ways. If $\chi\in\cR^\perp$, the inner sum is $|\cR|$; otherwise it is $0$ by translation invariance over $\cR$. Hence
$
    \eta=E_S^\perp|\cR|.
$
Grouping the same sum by the support $T$ of $v\in\cR$ gives
$
    \eta=\sum_{T\subseteq\cD}E_TK_T(S).
$
The claim follows.
\end{proof}
\begin{theorem}[Parametrized LP bound]
\label{thm:L_Bound_parametrized}
Fix a local-dimension profile $\mathbf t=(t_i:i\in\cD)$. If a mixed-alphabet linear code $\cR$ with profile $\mathbf t$ and minimum block distance of at least $w$ exists, then $|\cR|$ is at most the optimum of the following linear program:
\[
\begin{array}{ll}
\textnormal{maximize}
&
\displaystyle \sum_{T\subseteq\cD}E_T
\\[2mm]
\textnormal{subject to}
&
E_\emptyset=1,
\\[1mm]
&
E_T=0,
\quad 1\le |T|<w,
\\[1mm]
&
E_T\ge0,
\quad |T|\ge w,
\\[1mm]
&
\displaystyle
\sum_{T\subseteq\cD}E_TK_T(S)\ge0,
\quad
\forall S\subseteq\cD .
\end{array}
\tag{LP}
\label{eq:ordered_LP}
\]
\end{theorem}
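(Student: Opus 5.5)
The plan is to show that the support enumerator $(E_T)_{T\subseteq\cD}$ of any such code $\cR$ is a feasible point of \eqref{eq:ordered_LP} whose objective value equals $|\cR|$. Since the LP optimum is at least the value of every feasible point, this gives $|\cR|\le$ optimum.

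First, the objective. The sets $\{v\in\cR:\supp{v}=T\}$, for $T\subseteq\cD$, partition $\cR$. Hence $\sum_{T}E_T=|\cR|$.

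Next, the equality and sign constraints.
\begin{itemize}
\item The zero word is the unique codeword with empty support, because $\cR$ is linear and so contains $\bm 0$. Thus $E_\emptyset=1$.
\item Every nonzero codeword has block weight at least $w$. Thus $E_T=0$ whenever $1\le |T|<w$.
\item Each $E_T$ counts codewords, so $E_T\ge 0$ for all $T$, in particular for $|T|\ge w$.
\end{itemize}

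The only substantive step is the family of dual constraints. For this I will invoke Lemma~\ref{lem:Enum_perp}, which gives
\[
\sum_{T\subseteq\cD}E_TK_T(S)=|\cR|\,E_S^\perp .
\]
The right-hand side is $|\cR|$ times the number of characters in $\cR^\perp$ with support exactly $S$. It is therefore a nonnegative integer, so $\sum_T E_TK_T(S)\ge 0$ for every $S\subseteq\cD$.

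The point to check here is that Lemma~\ref{lem:Enum_perp} applies with exactly the kernel $K_T(S)$ used in the LP. It does: the lemma is stated for an arbitrary additive code in $\bV$. The well-definedness of $K_T(S)$, meaning independence from the representative $v$ with $\supp{v}=T$, is supplied by the explicit formula in Lemma~\ref{lem:support-kernel}.

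With all constraints verified, $(E_T)$ is feasible and attains objective value $|\cR|$, so $|\cR|$ is at most the optimum. No real obstacle is expected. The only subtlety is that the LP is a relaxation over real variables, which is harmless because every integral feasible point is also real-feasible.
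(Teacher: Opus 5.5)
Your proposal is correct and matches the paper's proof. Both show that the support enumerator of $\cR$ is a feasible point of the LP with objective value $|\cR|$, and both get the dual constraints from Lemma~\ref{lem:Enum_perp} as $\sum_T E_T K_T(S) = |\cR|\,E_S^\perp \ge 0$.
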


\begin{proof}
For an actual code, $E_T$ is its exact support enumerator, so the objective equals $|\cR|$. The first three constraints encode the zero word, the minimum-distance condition, and non-negativity. The last constraint is equivalent to $E_S^\perp\ge0$, by Lemma~\ref{lem:Enum_perp}. Therefore, every code yields a feasible LP solution with objective value $|\cR|$.
\end{proof}

\begin{remark}[Dual viewpoint]
\label{rem:dual-viewpoint}
The LP also has the standard Delsarte dual formulation. In the dual formulation, feasible dual solutions can be interpreted as test functions
on support sets. The Singleton-, Hamming-, and Plotkin-type inequalities in Section~\ref{sec:bound} can be recovered from suitable choices of such
test functions. We use this remark only to indicate the relation with the closed-form bounds; the rest of the paper works with the primal support-enumerator formulation.
\end{remark}
\subsection{Exact type compression}
\label{subsec:type-compression}

Let $U_q(\mathbf t,\omega)$ denote the optimum of \eqref{eq:ordered_LP} for local-dimension profile $\mathbf t$ and minimum block distance at least $\omega$. If $t_i=0$, then coordinate $i$ contains only the zero symbol in every codeword. Deleting such coordinates preserves both code size and block distance; after deletion, we relabel the remaining coordinate set as $\cD$. If the remaining length is smaller than $\omega$, only the zero code is possible, and we set
$
    U_q(\mathbf t,\omega)=1.
$

The LP in \eqref{eq:ordered_LP} has one variable per support subset. In the repair application, coordinates with equal local dimensions are symmetric. We exploit this symmetry exactly.

Let
$
    \mathbf m=(m_1,\ldots,m_L)
$
be a type profile, where $m_s$ is the number of coordinates of local dimension $s$, and $L$ is the largest local dimension appearing in the profile. Define
\[
    \mathcal X(\mathbf m)
    :=
    \left\{
    \mathbf x\in\mathbb Z_{\ge0}^{L}:
    0\le x_s\le m_s,\ s=1,\ldots,L
    \right\}.
\]
For $\mathbf x\in\mathcal X(\mathbf m)$, set
\[
    |\mathbf x|:=\sum_{s=1}^{L}x_s,
    \quad
    N(\mathbf x):=\prod_{s=1}^{L}\binom{m_s}{x_s}.
\]

Given $\mathbf m$, partition the coordinate set as
\[
    \cD=\cD_1\sqcup\cdots\sqcup\cD_L,
    \quad
    \cD_s=\{i\in\cD:t_i=s\},
    \quad
    |\cD_s|=m_s.
\]
For a subset $S\subseteq\cD$, define its type by
\[
    \operatorname{type}_{\mathbf m}(S)
    :=
    \bigl(|S\cap\cD_1|,\ldots,|S\cap\cD_L|\bigr)
    \in\mathcal X(\mathbf m).
\]
When $\mathbf m$ is clear, we write $\operatorname{type}(S)$.

For $\mathbf x,\mathbf y\in\mathcal X(\mathbf m)$, define
\[
    \mathsf K_{\mathbf x}(\mathbf y)
    :=
    \prod_{s=1}^{L}
    \left[
    \sum_{z=0}^{\min\{x_s,y_s\}}
    \binom{y_s}{z}
    \binom{m_s-y_s}{x_s-z}
    (-1)^z
    (q^s-1)^{y_s-z}
    \right],
\]
where $\binom ab=0$ if $b<0$ or $b>a$.

\begin{lemma}[Type kernel]
\label{lem:type-kernel}
Fix a type profile $\mathbf m$. Let $S\subseteq\cD$ have type
$
    \operatorname{type}(S)=\mathbf y.
$
Then, for every $\mathbf x\in\mathcal X(\mathbf m)$,
\[
    \sum_{\substack{T\subseteq\cD\\ \operatorname{type}(T)=\mathbf x}}
    K_T(S)
    =
    \mathsf K_{\mathbf x}(\mathbf y).
\]
\end{lemma}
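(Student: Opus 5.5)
The plan is to start from the product formula of Lemma~\ref{lem:support-kernel} and sum it over all $T$ of a fixed type, using that the coordinate classes $\cD_1,\ldots,\cD_L$ interact independently. Every coordinate $i\in\cD_s$ has $t_i=s$. Hence
\[
    K_T(S)=\prod_{s=1}^{L}\Bigl[(q^s-1)^{|(S\setminus T)\cap\cD_s|}(-1)^{|S\cap T\cap\cD_s|}\Bigr].
\]
Each factor depends only on the pair $(S\cap\cD_s,\,T\cap\cD_s)$.

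\emph{Step 1: factor the sum.} A subset $T$ with $\operatorname{type}(T)=\mathbf x$ is the same as an independent choice of subsets $T_s\subseteq\cD_s$ with $|T_s|=x_s$, one for each $s$. The sum over $T$ is therefore a product over $s$ of sums over $T_s$:
\[
    \sum_{\operatorname{type}(T)=\mathbf x}K_T(S)=\prod_{s=1}^{L}\ \sum_{\substack{T_s\subseteq\cD_s\\|T_s|=x_s}}(q^s-1)^{|S_s\setminus T_s|}(-1)^{|S_s\cap T_s|},
\]
where $S_s=S\cap\cD_s$ and $|S_s|=y_s$.

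\emph{Step 2: count each factor.} Inside class $s$, I split by $z=|S_s\cap T_s|$. There are $\binom{y_s}{z}$ ways to pick $T_s\cap S_s$ and $\binom{m_s-y_s}{x_s-z}$ ways to pick the rest of $T_s$ from $\cD_s\setminus S_s$. For each such $T_s$ the summand is $(-1)^z(q^s-1)^{y_s-z}$, since $|S_s\setminus T_s|=y_s-z$. Summing over $z$ from $0$ to $\min\{x_s,y_s\}$ gives exactly the bracketed factor of $\mathsf K_{\mathbf x}(\mathbf y)$. The convention $\binom ab=0$ automatically removes infeasible $z$, namely those with $x_s-z>m_s-y_s$.

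\emph{Step 3: conclude.} This shows the value depends only on $\mathbf y$, not on the particular $S$, and equals $\mathsf K_{\mathbf x}(\mathbf y)$. There is no real obstacle here. The only point needing care is justifying the factorization in Step 1, i.e.\ the bijection between type-$\mathbf x$ subsets and tuples $(T_s)_s$. It also matters that zero-dimensional coordinates have already been deleted, so that every coordinate lies in some $\cD_s$ with $s\ge1$.
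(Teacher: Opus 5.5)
Your proposal is correct and follows essentially the same route as the paper's proof. Both split the sum by dimension class, count the $T$ with $|S\cap T\cap\cD_s|=z$ via $\binom{y_s}{z}\binom{m_s-y_s}{x_s-z}$, and multiply over $s$. Your explicit bijection between type-$\mathbf x$ subsets and tuples $(T_s)_s$, which justifies swapping the sum and the product, is exactly the step the paper leaves implicit in ``multiplying over all $s$.''
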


\begin{proof}
Fix a dimension class $\cD_s$. Exactly $y_s$ coordinates of $\cD_s$ lie in $S$. If $T$ has type $\mathbf x$, then $T$ contains $x_s$ coordinates from $\cD_s$. Suppose $z$ of them lie in $S$. There are
\[
    \binom{y_s}{z}\binom{m_s-y_s}{x_s-z}
\]
ways to make this choice. The $z$ coordinates in $S\cap T\cap\cD_s$ contribute $(-1)^z$, while the $y_s-z$ coordinates in $(S\setminus T)\cap\cD_s$ contribute $(q^s-1)^{y_s-z}$. Coordinates in $T\setminus S$ contribute $1$. Thus, the contribution of the dimension-$s$ class is
\[
    \sum_{z=0}^{\min\{x_s,y_s\}}
    \binom{y_s}{z}
    \binom{m_s-y_s}{x_s-z}
    (-1)^z(q^s-1)^{y_s-z}.
\]
Multiplying over all $s$ gives the formula.
\end{proof}

Let $e_{\mathbf x}$ denote the common value of $E_T$ over all supports of type $\mathbf x$. Hence, the objective is weighted by
\[
    N(\mathbf x)=|\{T\subseteq\cD:\operatorname{type}(T)=\mathbf x\}|,
\]
the number of such supports.

Let $U_q^{\mathrm{typ}}(\mathbf m,\omega)$ be the optimum of the
following type-compressed LP:
\[
\begin{array}{ll}
\textnormal{maximize}
&
\displaystyle
\sum_{\mathbf x\in\mathcal X(\mathbf m)}
N(\mathbf x)e_{\mathbf x}
\\[2mm]
\textnormal{subject to}
&
e_{\mathbf 0}=1,
\\[1mm]
&
e_{\mathbf x}=0,
\quad 1\le|\mathbf x|<\omega,
\\[1mm]
&
e_{\mathbf x}\ge0,
\quad |\mathbf x|\ge\omega,
\\[1mm]
&
\displaystyle
\sum_{\mathbf x\in\mathcal X(\mathbf m)}
e_{\mathbf x}\mathsf K_{\mathbf x}(\mathbf y)
\ge0,
\quad
\forall \mathbf y\in\mathcal X(\mathbf m).
\end{array}
\tag{TC-LP}
\label{eq:type_compressed_LP}
\]

If $\sum_{s=1}^{L}m_s<\omega$, we set
$
    U_q^{\mathrm{typ}}(\mathbf m,\omega)=1.
$

\begin{proposition}[Exact type compression]
\label{prop:type-compression}
Let $\mathbf m(\mathbf t)$ be the type profile of $\mathbf t$, after
deleting zero-dimensional coordinates. Then
\[
    U_q(\mathbf t,\omega)
    =
    U_q^{\mathrm{typ}}(\mathbf m(\mathbf t),\omega).
\]
\end{proposition}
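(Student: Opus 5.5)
We prove the equality by two inequalities, using the symmetry group $G=\prod_{s=1}^{L}\mathrm{Sym}(\cD_s)$ of permutations preserving each dimension class. First we reduce to the case with no zero-dimensional coordinates. By the convention preceding the proposition, $U_q(\mathbf t,\omega)$ is defined after deleting such coordinates, and both sides equal $1$ when the remaining length is below $\omega$. So we may assume $t_i\ge 1$ for all $i$ and $\sum_s m_s=|\cD|\ge\omega$. The key structural fact, read off from Lemma~\ref{lem:support-kernel}, is that $K_T(S)$ depends only on the multiset of $t_i$ over $S\setminus T$ and on $|S\cap T|$. Hence $K_{\sigma T}(\sigma S)=K_T(S)$ for every $\sigma\in G$.

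\emph{Step 1 ($U_q^{\mathrm{typ}}\le U_q$).} Take a feasible $(e_{\mathbf x})$ for \eqref{eq:type_compressed_LP} and lift it by $E_T:=e_{\operatorname{type}(T)}$. The constraints $E_\emptyset=1$, $E_T=0$ for $1\le|T|<\omega$, and $E_T\ge0$ transfer directly, since $|T|=|\operatorname{type}(T)|$. The objective is $\sum_T E_T=\sum_{\mathbf x}N(\mathbf x)e_{\mathbf x}$. For the dual constraint indexed by $S$ with $\operatorname{type}(S)=\mathbf y$, group the $T$ by type and apply Lemma~\ref{lem:type-kernel}:
\[
\sum_T E_TK_T(S)=\sum_{\mathbf x}e_{\mathbf x}\sum_{\operatorname{type}(T)=\mathbf x}K_T(S)=\sum_{\mathbf x}e_{\mathbf x}\mathsf K_{\mathbf x}(\mathbf y)\ge0 .
\]
So the lift is feasible for \eqref{eq:ordered_LP} with the same objective value.

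\emph{Step 2 ($U_q\le U_q^{\mathrm{typ}}$).} Take any feasible $(E_T)$ for \eqref{eq:ordered_LP}, possibly an optimal one; it exists because the feasible region is a polytope. Every feasible solution is bounded, since taking $S=\emptyset$ gives $K_T(\emptyset)=1$, and the remaining constraints keep the $E_T$ controlled. For $\sigma\in G$, the permuted solution $E^\sigma_T:=E_{\sigma T}$ is feasible with the same objective. Indeed, $\sum_T E_{\sigma T}K_T(S)=\sum_{T'}E_{T'}K_{\sigma^{-1}T'}(S)=\sum_{T'}E_{T'}K_{T'}(\sigma S)\ge0$, using the invariance of $K$. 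Since the feasible region is convex, the average $\bar E:=|G|^{-1}\sum_{\sigma\in G}E^\sigma$ is feasible with the same objective. Because $G$ acts transitively on the subsets of a given type, $\bar E_T$ depends only on $\operatorname{type}(T)$; write $\bar E_T=e_{\operatorname{type}(T)}$. The computation in Step 1, run in reverse, shows that $(e_{\mathbf x})$ satisfies every constraint of \eqref{eq:type_compressed_LP}. Taking one representative $S$ of each type $\mathbf y$ handles the dual constraints, and the objective is $\sum_{\mathbf x}N(\mathbf x)e_{\mathbf x}=\sum_T E_T$.

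The only delicate points are the $G$-invariance of $K_T(S)$ and the transitivity of $G$ on subsets of fixed type; both are immediate. I expect no real obstacle beyond careful bookkeeping of the degenerate conventions.
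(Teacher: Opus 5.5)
Your proposal is correct and is essentially the paper's proof: you average a feasible solution of \eqref{eq:ordered_LP} over $G=\prod_s\mathrm{Sym}(\cD_s)$ to get a type-constant feasible solution, and conversely lift $(e_{\mathbf x})$ via $E_T:=e_{\operatorname{type}(T)}$, with Lemma~\ref{lem:type-kernel} translating the dual constraints in both directions; you also spell out the $G$-invariance of $K_T(S)$ and the transitivity on fixed-type subsets, which the paper leaves implicit. The remark that an optimum exists is unnecessary, since both maps preserve feasibility and objective value, and its justification is off as written: the $S=\emptyset$ constraint alone only gives $\sum_T E_T\ge 0$, and the upper bound $\sum_T E_T\le|\bV|$ instead comes from summing all dual constraints, using that $\sum_S K_T(S)$ equals $|\bV|$ if $T=\emptyset$ and $0$ otherwise.
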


\begin{proof}
We write $\mathbf m=\mathbf m(\mathbf t)$, and let $G=\prod_{s=1}^{L}\operatorname{Sym}(m_s)$ be the group that permutes coordinates of equal local dimension.

First, take any feasible solution $(E_T: T\subseteq\cD)$ of the ordered LP and average it over $G$. The objective is unchanged, and feasibility is preserved because $G$ preserves support sizes and local dimensions. After averaging, $E_T$ depends only on $\operatorname{type}(T)$. Write this common value as $e_{\mathbf x}$ for supports of type $\mathbf x$. The objective becomes
$
    \sum_{\mathbf x\in\mathcal X(\mathbf m)}
    N(\mathbf x)e_{\mathbf x}.
$
The zero-word, distance, and non-negativity constraints become the corresponding constraints of $(\mathrm{TC\text{-}LP})$. For a support $S$ of type $\mathbf y$, the ordered dual constraint becomes
\[
    \sum_{\mathbf x\in\mathcal X(\mathbf m)}
    e_{\mathbf x}
    \sum_{\substack{T\subseteq\cD\\\operatorname{type}(T)=\mathbf x}}
    K_T(S)
    \ge0.
\]
By Lemma~\ref{lem:type-kernel}, this is exactly
$
    \sum_{\mathbf x\in\mathcal X(\mathbf m)}
    e_{\mathbf x}\mathsf K_{\mathbf x}(\mathbf y)
    \ge0.
$
Thus, the averaged solution gives a feasible solution to
$(\mathrm{TC\text{-}LP})$ with the same objective value.

Conversely, let $(e_{\mathbf x})$ be feasible for
$(\mathrm{TC\text{-}LP})$, and define $E_T:=e_{\operatorname{type}(T)}.$
The objective is preserved because
\[
    \sum_{T\subseteq\cD}E_T
    =
    \sum_{\mathbf x\in\mathcal X(\mathbf m)}
    N(\mathbf x)e_{\mathbf x}.
\]
The zero-word, distance, and non-negativity constraints are immediate. For any support $S$ of type $\mathbf y$, Lemma~\ref{lem:type-kernel} turns the ordered dual constraint into the $\mathbf y$-constraint of $(\mathrm{TC\text{-}LP})$. Hence, the lifted solution is feasible for the ordered LP with the same objective value.

The two optima are equal.
\end{proof}

\subsection{LP lower bound for repair bandwidth}
\label{subsec:lp-repair}

Let
\[
    \mathcal M_{d,\ell}
    :=
    \left\{
    \mathbf m=(m_1,\ldots,m_\ell)\in\mathbb Z_{\ge0}^{\ell}:
    \sum_{s=1}^{\ell}m_s=d
    \right\}.
\]
Here, $m_s$ denotes the number of helpers that send $s$ symbols. For $\mathbf m\in\mathcal M_{d,\ell}$, define
\[
    \beta(\mathbf m)=\sum_{s=1}^{\ell}s m_s,
    \quad
    \alpha(\mathbf m)=d\ell-\beta(\mathbf m).
\]
The $\cB$-profile has type $\mathbf m$. The nonzero part of the complementary $\cA$-profile has type
\[
    \mathbf m^A=(m_{\ell-1},m_{\ell-2},\ldots,m_1),
\]
where coordinates of $\cA$-dimension zero have been deleted.

We define $L_A(\mathbf m):=U_q^{\mathrm{typ}}(\mathbf m^A,r+1)$, and define $L_B(\mathbf m) := U_q^{\mathrm{typ}}(\mathbf m,d-r+1)$ if $d\ge r$.

\begin{definition}[LP-admissible profile]
\label{def:lp-admissible}
A type profile $\mathbf m\in\mathcal M_{d,\ell}$ is \emph{LP-admissible} if
\[
    L_A(\mathbf m)
    \ge
    \max\{q^{\alpha(\mathbf m)-(r-h)\ell},1\},
\]
and, when $d\ge r$, $L_B(\mathbf m)\ge q^{h\ell}. $
Let $\Gamma_{\mathrm{LP}}\subseteq\mathcal M_{d,\ell}$ be the set of LP-admissible profiles.
\end{definition}

\begin{theorem}[LP lower bound on repair bandwidth]
\label{thm:lp-repair-bound}
Every feasible repair scheme induces a profile in $\Gamma_{\mathrm{LP}}$. If $\Gamma_{\mathrm{LP}}\neq\emptyset$, then
\[
    \beta(h+d,k,\ell,\Fq,h,d)
    \ge
    \min_{\mathbf m\in\Gamma_{\mathrm{LP}}}\beta(\mathbf m).
\]
If $\Gamma_{\mathrm{LP}}=\emptyset$, then no linear repair scheme exists for the reduced full-helper instance with the prescribed parameters.
\end{theorem}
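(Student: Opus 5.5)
The proof chains together results already established in the paper. Fix an arbitrary feasible repair scheme for the reduced full-helper instance: a code $\cC$ with $n=h+d$, a failure set $\cH$, and a valid repair matrix $M\in\cV(\cC,\cH,\cD)$ with bandwidth $\beta=\tB(M)$. Let $\mathbf m$ record how many helpers have each value $b_j=\rk{MH_j}$. Validity gives $b_j\ge 1$, and $b_j\le \ell$ holds trivially, so $\mathbf m\in\mathcal M_{d,\ell}$. By \eqref{eq:bw-relation}, $\beta(\mathbf m)=\sum_j b_j=\beta$ and $\alpha(\mathbf m)=\alpha$. It then remains to show $\mathbf m\in\Gamma_{\mathrm{LP}}$; the displayed inequality follows by taking the minimum over all schemes. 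The statement for $\Gamma_{\mathrm{LP}}=\emptyset$ is just the contrapositive of membership.

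\textbf{The $\cA$-side condition.} By Lemma~\ref{lem:code_A}, $\cA$ is an $\Fq$-linear mixed-alphabet code with local dimensions $a_j=\ell-b_j$ and minimum block distance at least $r+1$. Its size satisfies $|\cA|\ge q^{\alpha-(r-h)\ell}$, and trivially $|\cA|\ge 1$. A helper with $b_j=s$ has $a_j=\ell-s$. So after deleting the $a_j=0$ coordinates (those with $s=\ell$), the type profile of the $\cA$-alphabets is $\mathbf m^A=(m_{\ell-1},\dots,m_1)$. Deleting coordinates that are identically zero preserves size and block distance.

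I then apply Theorem~\ref{thm:L_Bound_parametrized}, which gives $|\cA|\le U_q(\mathbf a,r+1)$, and Proposition~\ref{prop:type-compression}, which gives $U_q(\mathbf a,r+1)=U_q^{\mathrm{typ}}(\mathbf m^A,r+1)=L_A(\mathbf m)$. There is a degenerate case: if the remaining length is below $r+1$, then $\cA=\{\bm 0\}$ and the convention $U=1$ still bounds $|\cA|$. Either way, $L_A(\mathbf m)\ge |\cA|\ge\max\{q^{\alpha-(r-h)\ell},1\}$.

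\textbf{The $\cB$-side condition.} When $d\ge r$, Lemma~\ref{lem:code_B} shows that $\cB$ is linear, has alphabets of dimensions $b_j\ge1$ (type $\mathbf m$, with no deletions needed), satisfies $|\cB|=q^{h\ell}$, and has minimum block distance at least $d-r+1$. The same two results then give $q^{h\ell}=|\cB|\le L_B(\mathbf m)$.

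\textbf{Main obstacle.} The only delicate point is a bookkeeping issue about indexing. Theorem~\ref{thm:L_Bound_parametrized} is stated for abstract spaces $\bV_i$ with characters. It applies to $\bA_j$ and $\bB_j$ because each is an $\Fq$-space isomorphic to $\Fq^{t_i}$, and the support kernel depends only on the $t_i$. The second check is that the reversal in $\mathbf m^A$ correctly matches type indices: the number of $\cA$-coordinates of dimension $s$ equals $m_{\ell-s}$, for $s=1,\dots,\ell-1$.
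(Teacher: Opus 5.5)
Your proposal is correct and follows essentially the same argument as the paper. Both take the type profile $\mathbf m$ of $(b_j)$, invoke Theorem~\ref{thm:main-reduction} for $\cA$ (of type $\mathbf m^A$ after deleting zero-dimensional coordinates) and, when $d\ge r$, for $\cB$ (of type $\mathbf m$), bound their sizes by $L_A(\mathbf m)$ and $L_B(\mathbf m)$ via the LP bound and exact type compression, conclude $\mathbf m\in\Gamma_{\mathrm{LP}}$, and minimize over schemes; your extra checks ($b_j\ge 1$ placing $\mathbf m$ in $\mathcal M_{d,\ell}$, the short-length convention $U=1$, and the reversal $m^A_s=m_{\ell-s}$) are details the paper leaves implicit.
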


\begin{proof}
Consider any feasible repair scheme, and let $\mathbf m$ be the type profile of its bandwidth vector $(b_j:j\in\cD)$. Then
\[
    \beta=\beta(\mathbf m),
    \quad
    \alpha=\alpha(\mathbf m).
\]

By Theorem~\ref{thm:main-reduction}, the scheme induces a null-space code $\cA$ with
\[
    |\cA|\ge q^{\alpha(\mathbf m)-(r-h)\ell},
    \quad
    d_{\min}(\cA)\ge r+1.
\]
After deleting zero-dimensional coordinates, $\cA$ has type profile $\mathbf m^A$. Hence, the LP upper bound gives
$
    |\cA|\le L_A(\mathbf m).
$
Since every code has size at least $1$, the necessary condition is
\[
    L_A(\mathbf m)
    \ge
    \max\{q^{\alpha(\mathbf m)-(r-h)\ell},1\}.
\]

If $d\ge r$, Theorem~\ref{thm:main-reduction} also gives a row-space code $\cB$ of type profile $\mathbf m$ with
\[
    |\cB|=q^{h\ell},
    \quad
    d_{\min}(\cB)\ge d-r+1.
\]
Thus, the LP upper bound gives
$
    q^{h\ell}=|\cB|\le L_B(\mathbf m).
$
Therefore $\mathbf m\in\Gamma_{\mathrm{LP}}$.

Every feasible repair scheme induces an LP-admissible profile, so its bandwidth is at least
$
    \min_{\mathbf m\in\Gamma_{\mathrm{LP}}}\beta(\mathbf m).
$
Taking the minimum over all repair schemes gives the claimed lower bound. If $\Gamma_{\mathrm{LP}}=\emptyset$, no feasible repair profile can pass the necessary LP tests.
\end{proof}

For original parameters $(n,k,\ell,q,h,d)$, apply the theorem to the reduced full-helper instance from Lemma~\ref{lem:reduction_d}, whose redundancy is $r'=h+d-k$.

\begin{algorithm}[H]
\caption{Type-compressed LP lower bound}
\label{algo:computing-bound}
\begin{algorithmic}[1]
    \Require $d,r,h,\ell,q$, where $r$ is the redundancy of the
full-helper instance
    \Ensure $\beta_{\mathrm{LP}}$ or \textnormal{Infeasible}

    \State Construct $\mathcal M_{d,\ell}$ and sort its elements by
    $
        \beta(\mathbf m)=\sum_{s=1}^{\ell}s m_s.
   $

    \For{$\mathbf m\in\mathcal M_{d,\ell}$ in this order}
        \State $\alpha(\mathbf m)\gets d\ell-\beta(\mathbf m)$.
        \State $\mathbf m^A\gets(m_{\ell-1},m_{\ell-2},\ldots,m_1)$.
        \State $L_A\gets U_q^{\mathrm{typ}}(\mathbf m^A,r+1)$.

        \If{$L_A<\max\{q^{\alpha(\mathbf m)-(r-h)\ell},1\}$}
            \State \textbf{continue}
        \EndIf

        \If{$d\ge r$}
            \State $L_B\gets U_q^{\mathrm{typ}}(\mathbf m,d-r+1)$.
            \If{$L_B<q^{h\ell}$}
                \State \textbf{continue}
            \EndIf
        \EndIf

        \State \Return $\beta_{\mathrm{LP}}=\beta(\mathbf m)$.
    \EndFor

    \State \Return \textnormal{Infeasible}.
\end{algorithmic}
\end{algorithm}

\begin{corollary}
Algorithm~\ref{algo:computing-bound} returns $\min_{\mathbf m\in\Gamma_{\mathrm{LP}}}\beta(\mathbf m)$, or reports infeasibility if $\Gamma_{\mathrm{LP}}=\emptyset$.
\end{corollary}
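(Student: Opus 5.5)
The plan is a direct correctness argument for a linear scan over a finite, sorted list. By Definition~\ref{def:lp-admissible}, $\Gamma_{\mathrm{LP}}\subseteq\mathcal M_{d,\ell}$ is the set of profiles $\mathbf m$ that pass two tests:
\begin{itemize}
\item[(i)] $L_A(\mathbf m)\ge\max\{q^{\alpha(\mathbf m)-(r-h)\ell},1\}$;
\item[(ii)] when $d\ge r$, $L_B(\mathbf m)\ge q^{h\ell}$.
\end{itemize}
The first step is to check that the body of the \textbf{for} loop decides exactly this membership. For a given $\mathbf m$, the loop computes $\alpha(\mathbf m)=d\ell-\beta(\mathbf m)$ and the reversed profile $\mathbf m^A=(m_{\ell-1},\ldots,m_1)$. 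It then evaluates $L_A=U_q^{\mathrm{typ}}(\mathbf m^A,r+1)$, which equals $L_A(\mathbf m)$ by definition. Proposition~\ref{prop:type-compression} guarantees that this is the same optimum as the ordered LP, so no admissibility information is lost by compression. The algorithm skips $\mathbf m$ precisely when (i) fails. Likewise, when $d\ge r$ it computes $L_B=U_q^{\mathrm{typ}}(\mathbf m,d-r+1)=L_B(\mathbf m)$ and skips precisely when (ii) fails. Hence an iteration reaches the \textbf{return} statement if and only if $\mathbf m\in\Gamma_{\mathrm{LP}}$.

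The second step uses the ordering. The algorithm enumerates all of the finite set $\mathcal M_{d,\ell}$ in nondecreasing order of $\beta(\mathbf m)$ and returns at the first $\mathbf m$ passing both tests. Suppose $\Gamma_{\mathrm{LP}}\neq\emptyset$, and let $\mathbf m^\ast$ be the first admissible element in the sorted order. Every element before $\mathbf m^\ast$ is non-admissible and is skipped. Every admissible $\mathbf m'$ appears at or after $\mathbf m^\ast$ in the order, so $\beta(\mathbf m')\ge\beta(\mathbf m^\ast)$. Therefore the returned value $\beta(\mathbf m^\ast)$ equals $\min_{\mathbf m\in\Gamma_{\mathrm{LP}}}\beta(\mathbf m)$. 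Ties in $\beta$ do not matter, because only the returned value, not the returned profile, is claimed. If instead $\Gamma_{\mathrm{LP}}=\emptyset$, every iteration hits a \textbf{continue}, the loop is exhausted, and the algorithm reports infeasibility.

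The main point needing care is the boundary conventions, so that the loop's tests coincide with the definition of $L_A$ in degenerate cases. If $\mathbf m^A$ has total length $\sum_s m^A_s<r+1$, the convention $U_q^{\mathrm{typ}}(\mathbf m^A,r+1)=1$ applies. The algorithm must use this same convention, and the comparison with $\max\{\cdot,1\}$ is exactly as in the definition. The same applies to $L_B$ when the length is less than $d-r+1$. I would also note that $m_\ell$ is dropped from $\mathbf m^A$ because those helpers have $a_j=0$, matching the zero-coordinate deletion preceding Proposition~\ref{prop:type-compression}. Finally, the algorithm is well defined: each LP optimum is finite, since the constraint at $\mathbf y=\mathbf 0$ together with nonnegativity bounds the objective, as in the standard Delsarte argument.
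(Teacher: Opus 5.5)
Your proposal is correct and follows the paper's proof: the loop body applies exactly the two tests of Definition~\ref{def:lp-admissible}, and scanning $\mathcal M_{d,\ell}$ in nondecreasing order of $\beta(\mathbf m)$ makes the first admissible profile a minimizer, while exhausting the loop corresponds to $\Gamma_{\mathrm{LP}}=\emptyset$. The paper states this in one line, and you spell it out in detail.

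One small slip is in your optional finiteness remark, which the corollary does not need. The $\mathbf y=\mathbf 0$ constraint is $\sum_{\mathbf x}N(\mathbf x)e_{\mathbf x}\ge 0$, which only bounds the objective from below. Boundedness instead comes from summing all dual constraints with weights $N(\mathbf y)$: this sum equals $q^{\sum_s s m_s}$, and since every term is nonnegative it caps the $\mathbf y=\mathbf 0$ term, which is the objective.
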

\begin{proof}
The algorithm scans all type profiles in nondecreasing order of $\beta(\mathbf{m})$ and applies exactly the admissibility tests from Definition~\ref{def:lp-admissible}.
\end{proof}

\begin{theorem}[Polynomial-time LP bound for fixed $\ell$]
For every fixed sub-packetization $\ell$, Algorithm~\ref{algo:computing-bound} computes the LP lower bound in time
\[
    O_\ell\!\left(d^{\ell-1}T_{\mathrm{LP}}((d+1)^\ell)\right),
\]
where $T_{\mathrm{LP}}(N)$ is the time for solving an LP with $O(N)$ variables and $O(N)$ constraints.
\end{theorem}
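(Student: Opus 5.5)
The argument is a direct complexity count of Algorithm~\ref{algo:computing-bound}. It combines three bounds: the number of type profiles scanned, the size of each type-compressed LP instance, and the cost of assembling that LP. The theorem then follows by multiplying the number of iterations by the per-iteration cost.

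\emph{Number of profiles.} The loop ranges over $\mathcal M_{d,\ell}$, the set of weak compositions of $d$ into $\ell$ parts. Hence
\[
|\mathcal M_{d,\ell}|=\binom{d+\ell-1}{\ell-1}\le \frac{(d+\ell-1)^{\ell-1}}{(\ell-1)!}=O_\ell(d^{\ell-1}).
\]
Enumerating and sorting these profiles by $\beta(\mathbf m)$ costs $O_\ell(d^{\ell-1}\log d)$, which is dominated by the LP cost below. Each profile also needs $\alpha(\mathbf m)$, $\mathbf m^A$ and the thresholds $q^{\alpha-(r-h)\ell}$ and $q^{h\ell}$. For fixed $\ell$ these are numbers of polynomial bit-length, so they are cheap to compute.

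\emph{Size of each LP.} For a given profile, at most two LPs are solved: $U_q^{\mathrm{typ}}(\mathbf m^A,r+1)$ and, if $d\ge r$, $U_q^{\mathrm{typ}}(\mathbf m,d-r+1)$. By Proposition~\ref{prop:type-compression}, each equals the optimum of \eqref{eq:type_compressed_LP}. That LP has one variable $e_{\mathbf x}$ and one dual constraint for each $\mathbf x\in\mathcal X(\mathbf m)$, plus the equality and nonnegativity constraints, so its size is $O(|\mathcal X(\mathbf m)|)$. Moreover
\[
|\mathcal X(\mathbf m)|=\prod_{s}(m_s+1)\le (d+1)^{\ell},
\]
since there are at most $\ell$ factors and each $m_s\le d$. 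The same bound holds for $\mathbf m^A$, which has at most $\ell-1$ entries summing to at most $d$. Each solve therefore costs $T_{\mathrm{LP}}((d+1)^\ell)$, assuming $T_{\mathrm{LP}}$ is monotone. The degenerate case $\sum m_s<\omega$ is handled in constant time by the convention $U=1$.

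\emph{Assembling the LP.} This is the step needing care: we must show that writing down the LP is also within the budget. The matrix has at most $(d+1)^{2\ell}$ entries $\mathsf K_{\mathbf x}(\mathbf y)$. Each entry is a product of $\le\ell$ sums, and each sum has at most $d+1$ terms built from binomials and powers $(q^s-1)^{y_s-z}$. Tabulating the binomial coefficients up to $d$ and these powers once gives entries of bit-length $O_\ell(d\log q)$. Each entry then costs $\mathrm{poly}_\ell(d)$, and the whole matrix $O_\ell((d+1)^{2\ell}\mathrm{poly}(d))$. This is polynomial and, for any standard LP algorithm, dominated by $T_{\mathrm{LP}}((d+1)^\ell)$ on inputs of this size. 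I would state that assumption explicitly, absorbing input-writing time into $T_{\mathrm{LP}}$.

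Multiplying the $O_\ell(d^{\ell-1})$ iterations by the $O(T_{\mathrm{LP}}((d+1)^\ell))$ cost of each gives the claimed bound. The early \textbf{return} can only reduce the work. Since the algorithm scans profiles in nondecreasing order of $\beta(\mathbf m)$, the preceding corollary guarantees that the returned value is exactly the LP lower bound.
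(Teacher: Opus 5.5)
Your proposal is correct and follows essentially the same argument as the paper. It bounds the number of scanned profiles by $|\mathcal M_{d,\ell}|=\binom{d+\ell-1}{\ell-1}=O_\ell(d^{\ell-1})$, bounds each type-compressed LP by $\prod_s(m_s+1)\le(d+1)^\ell$ variables and constraints (the paper notes the sharper $(d+1)^{\ell-1}$ for the $\mathcal A$-side LP), and uses Proposition~\ref{prop:type-compression} for exactness. Your extra accounting for sorting, assembling the entries $\mathsf K_{\mathbf x}(\mathbf y)$, and bit-lengths is something the paper leaves implicit in $T_{\mathrm{LP}}$, and stating that absorption as an explicit assumption is the right way to handle it.
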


\begin{proof}
For fixed $\ell$, Algorithm~\ref{algo:computing-bound} runs in polynomial time in $d$. Indeed,
\[
    |\mathcal M_{d,\ell}|
    =
    \binom{d+\ell-1}{\ell-1}
    =
    O_\ell(d^{\ell-1}).
\]
For a fixed profile $\mathbf m=(m_1,\ldots,m_\ell)$, the row-space LP for $\cB$ has
\[
    N_B(\mathbf m):=\prod_{s=1}^{\ell}(m_s+1)\le(d+1)^\ell
\]
support-type variables and the same order of constraints. The null-space LP for $\cA$ has
\[
    N_A(\mathbf m):=
    \prod_{t=1}^{\ell-1}(m_{\ell-t}+1)
    =
    \prod_{s=1}^{\ell-1}(m_s+1)
    \le(d+1)^{\ell-1}.
\]
Thus, the total running time is
$
    O_\ell\!\left(
        d^{\ell-1}\,
        T_{\mathrm{LP}}\bigl((d+1)^\ell\bigr)
    \right),
$
where $T_{\mathrm{LP}}(N)$ is the time for solving an LP with $O(N)$ variables and constraints. In particular, any polynomial-time LP solver gives a polynomial-time algorithm in $d$ for fixed $\ell$.

The uncompressed LP has one variable and one dual constraint for each subset of $\cD$, hence size $2^d$. The type compression is exact by
Proposition~\ref{prop:type-compression}.
\end{proof}

\subsection{Numerical examples}
\label{subsec:numerical}
Remark~\ref{rem:dual-viewpoint} explains that the closed-form Singleton-, Hamming-, and Plotkin-type bounds can be viewed as arising from particular choices in the dual LP. This subsection illustrates the resulting repair-bandwidth lower bounds
on representative finite-parameter instances.

In accordance with Lemma~\ref{lem:reduction_d}, we work in the full-helper setting and assume
\(n=h+d\). Since repair bandwidth is integral, all plotted lower bounds are rounded up to integers.

 We denote by \(\beta_{\mathrm{cut}}\) the centralized cut-set bound, by
\(\beta_{\mathrm{cf}}\) the best closed-form bound from Section~\ref{sec:bound}, and by \(\beta_{\mathrm{LP}}\) the type-compressed LP bound computed by Algorithm~\ref{algo:computing-bound}.
We also plot the Reed--Solomon repair bandwidth \(k\ell\) as a baseline.

\paragraph{Case 1: lower-bound comparison for \(h=1,r=3,\ell=2,q=5\).}

Figure~\ref{fig:num-q5-l2} compares \(\beta_{\mathrm{cut}}\), \(\beta_{\mathrm{cf}}\), \(\beta_{\mathrm{LP}}\), and the RS repair bandwidth \(k\ell\) as the repair degree \(d\) varies. In this range, the LP bound improves on the explicit closed-form bound and remains below the RS repair baseline.

\begin{figure}[ht!]
    \centering
    \includegraphics[width=0.5\linewidth]{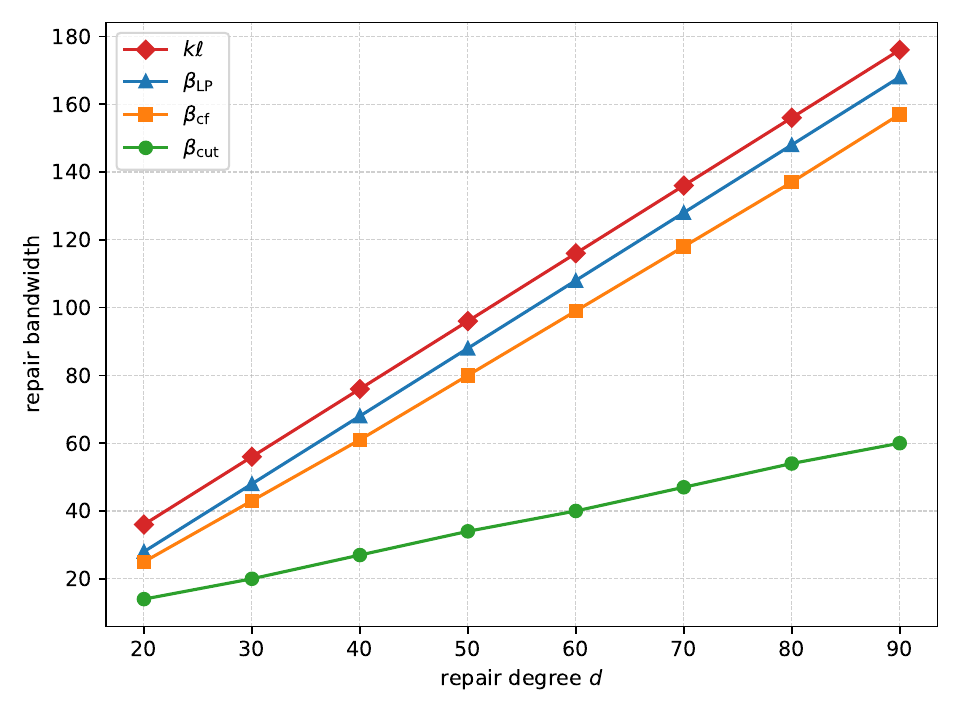}
    \caption{Comparison of lower bounds for \(h=1,r=3,\ell=2,q=5\). Here \(\beta_{\mathrm{cut}}\) is the cut-set bound, \(\beta_{\mathrm{cf}}\) is the best closed-form bound from Section~\ref{sec:bound}, \(\beta_{\mathrm{LP}}\) is the type-compressed LP bound, and \(k\ell\) is the repair bandwidth of traditional MDS repair.}
    \label{fig:num-q5-l2}
\end{figure}

\paragraph{Case 2: comparison with small-sub-packetization constructions.}

We next consider \(h=1,r=4,\ell=2,q=2^8\). The finite field of size \(q=2^8\) is widely used in storage-code implementations. In addition to the MDS repair baseline, we include LESS~\cite{cheng26} as a construction-side benchmark. For the corresponding parameters, the LESS repair bandwidth is
\[
    k+(\ell-1)\left\lceil \frac{n}{\ell+1}\right\rceil .
\]

Figure~\ref{fig:num-q256-l2-lower} compares the lower bounds \(\beta_{\mathrm{cut}}\), \(\beta_{\mathrm{cf}}\), and
\(\beta_{\mathrm{LP}}\). Figure~\ref{fig:num-q256-l2-constructions} compares the LP lower bound with the RS and LESS repair bandwidths.

\begin{figure}[ht!]
    \centering
    \begin{subfigure}{0.47\linewidth}
        \includegraphics[width=\linewidth]{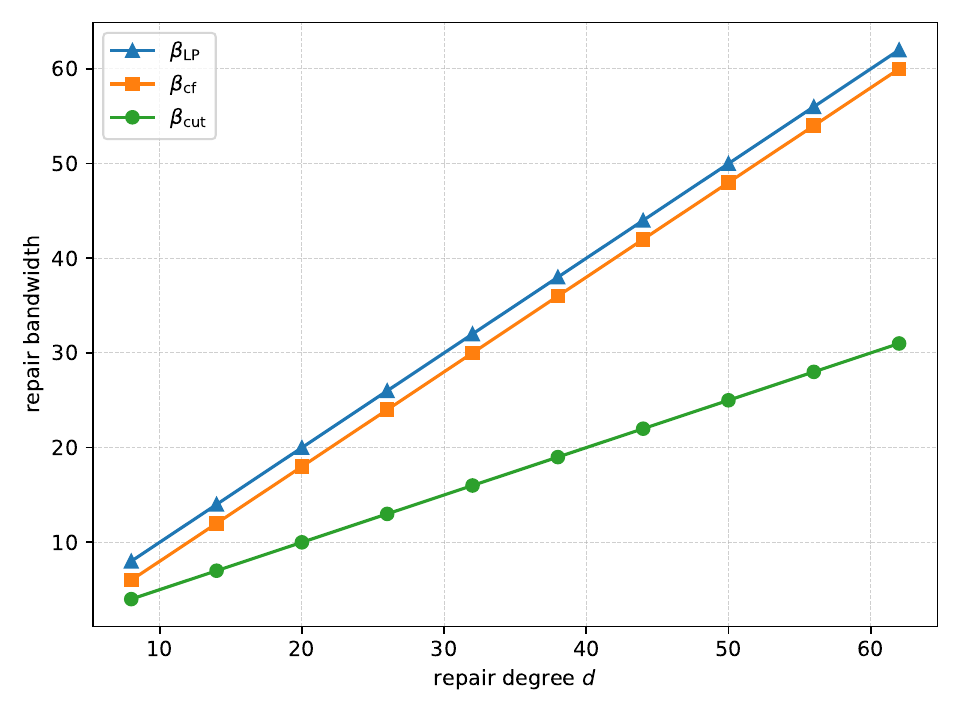}
        \caption{Lower-bound comparison.}
        \label{fig:num-q256-l2-lower}
    \end{subfigure}
    \hfill
    \begin{subfigure}{0.47\linewidth}
        \includegraphics[width=\linewidth]{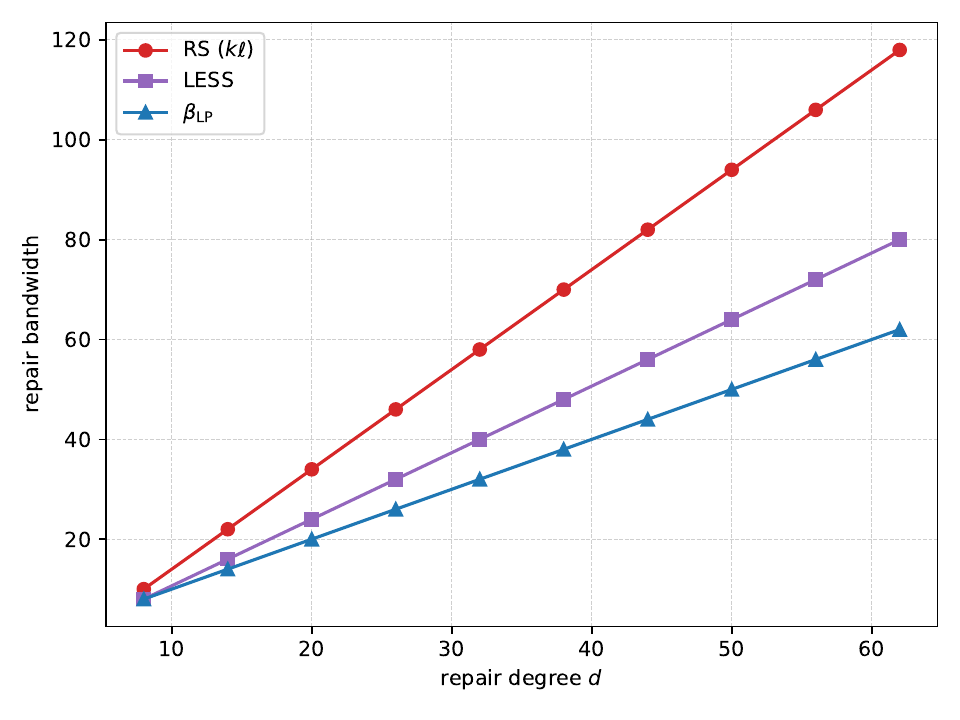}
        \caption{Comparison with RS and LESS bandwidths.}
        \label{fig:num-q256-l2-constructions}
    \end{subfigure}
    \caption{Numerical comparison for \(h=1,r=4,\ell=2,q=2^8\). In (a), \(\beta_{\mathrm{cut}}\) is the cut-set bound, \(\beta_{\mathrm{cf}}\) is the best closed-form bound from Section~\ref{sec:bound}, and \(\beta_{\mathrm{LP}}\) is the type-compressed LP bound. In (b), \(k\ell\) is the repair bandwidth of traditional MDS repair, \(k+(\ell-1)\left\lceil n/(\ell+1)\right\rceil\) is the repair bandwidth of LESS~\cite{cheng26}, and \(\beta_{\mathrm{LP}}\) is the LP lower bound.}
    \label{fig:num-q256-l2}
\end{figure}

These examples show that the type-compressed LP bound is computable for small and moderate finite parameters and can improve on the explicit closed-form bounds. They also indicate that a gap remains between the present lower bounds and these construction benchmarks.

\section{Conclusion}
\label{sec:conclusion}

We focus on lower bounds on the local minimum repair bandwidth for centralized linear repair of MDS array codes. The main step is a reduction from any linear repair scheme to auxiliary mixed-alphabet codes: a null-space code that is available for all repair degrees, and a row-space code that applies when $d\ge r$. Applying Singleton-, Hamming-, Plotkin-, and Griesmer-type bounds to these auxiliary codes yields several finite-parameter constraints on feasible bandwidth profiles.

We also gave a type-compressed LP algorithm for computing finite-parameter lower bounds. For fixed sub-packetization $\ell$, the compression is exact and reduces the support-enumerator LP from exponential size to polynomial size in $d$, yielding a polynomial-time procedure for evaluating the bound.

A natural direction for future work is to understand the relation between local and global repair optimality. The bounds developed here are local: they concern a fixed failure set and a fixed helper set. In contrast, global repair optimality requires one code to support low-bandwidth repair simultaneously for all repair patterns in a prescribed class. It would be interesting to characterize when local repair optima can be made mutually compatible to attain the global optimum, and, when they cannot, to quantify the resulting gap between local and global repair bandwidth.

\bibliographystyle{alpha}
\bibliography{ref}

\end{document}

%% file: def.tex
\usepackage{amsmath}
\usepackage{amssymb}
\usepackage{amsfonts}
\usepackage{amsthm}
\usepackage{array}
\usepackage{enumerate}

\usepackage{bm}
\usepackage{textcomp}

\usepackage{verbatim}
\usepackage{graphicx}

\usepackage{silence}
\usepackage{float}

\usepackage{soul}
\usepackage{color}
\usepackage{xcolor}
\usepackage{booktabs}

\usepackage{balance}

\usepackage[utf8]{inputenc}
\usepackage[T1]{fontenc}
\usepackage{url}
\usepackage{ifthen}
\usepackage{cite}
\usepackage{algorithm}
\usepackage{algorithmicx}
\usepackage{algpseudocode}

\allowdisplaybreaks

\newtheorem{theorem}{Theorem}
\newtheorem{lemma}[theorem]{Lemma}
\newtheorem{corollary}[theorem]{Corollary}

\newtheorem{proposition}[theorem]{Proposition}

\newtheorem{definition}{Definition}
\newtheorem{remark}{Remark}

\newtheorem{openproblem}{Open Problem}

\newlength{\spacelen}

\newcommand{\rk}[1]{\textnormal{rank}\left(#1\right)}

\newcommand{\Fq}{\mathbb{F}_q}

\newcommand{\bA}{\mathbb{A}}
\newcommand{\bB}{\mathbb{B}}

\newcommand{\bV}{\mathbb{V}}

\newcommand{\cA}{\mathcal{A}}
\newcommand{\cB}{\mathcal{B}}
\newcommand{\cC}{\mathcal{C}}
\newcommand{\cD}{\mathcal{D}}
\newcommand{\cE}{\mathcal{E}}

\newcommand{\cH}{\mathcal{H}}

\newcommand{\cR}{\mathcal{R}}

\newcommand{\cU}{\mathcal{U}}
\newcommand{\cV}{\mathcal{V}}

\DeclareMathOperator{\tB}{BW}